\theoremstyle{theorem}
\newtheorem{theorem}{Theorem} 
\theoremstyle{definition}
\newtheorem{definition}{Definition}
\newtheorem{example}{Example}
\theoremstyle{remark}
\newtheorem{remark}{Remark}
\newcommand{\diag}[1]{\textrm{diag}\{#1\}}
\newcommand{\col}[1]{\textrm{col}\{#1\}}
\newcommand{\idm}[1]{\mathbf{I}_{#1}}
\newcommand{\zdm}[1]{\mathbf{0}_{#1}}
\newcommand{\odm}[1]{\mathbf{1}_{#1}}
\DeclareMathOperator*{\argmin}{arg\,min}
\newcommand{\tb}[1]{\textcolor{black}{#1}}
\def\BibTeX{{\rm B\kern-.05em{\sc i\kern-.025em b}\kern-.08em
		T\kern-.1667em\lower.7ex\hbox{E}\kern-.125emX}} 
\begin{document}
	\title{Dynamic network analysis of a target defense differential game with limited observations}
	\author{Sharad Kumar Singh and Puduru Viswanadha Reddy, \IEEEmembership{Member, IEEE
		} 
		\thanks{S.K.Singh and P.V.Reddy are with the Department of Electrical Engineering, Indian Institute of Technology, Madras, Chennai, India. email:  {~ee15d201@smail.iitm.ac.in,~ vishwa@ee.iitm.ac.in}}}

	\maketitle

	\begin{abstract}  In this paper, we study a Target-Attacker-Defender (TAD) differential game involving one attacker, one target and multiple defenders. We consider two variations where (a) the attacker and the target have unlimited observation range and the defenders are visibility constrained
		(b) only the attacker has unlimited observation range  and the remaining players are visibility constrained.  We model the players' interactions as a dynamic game with asymmetric information. Here, the visibility constraints of the players induce a visibility network which encapsulates the visibility information during the evolution of the game. Based on this observation, we introduce network adapted feedback or implementable strategies for visibility constrained players. Using inverse game theory approach  we obtain network adapted feedback Nash equilibrium strategies. 
		We introduce  a consistency criterion for selecting a subset (or refinement) of network adapted feedback Nash strategies, and provide an optimization based approach for computing them.  Finally, we illustrate our results with numerical experiments.
	\end{abstract}
	
	\begin{IEEEkeywords}
		Target-Attacker-Defender differential game; dynamic games over networks; limited observations; Nash equilibrium; networked multi-agent systems
	\end{IEEEkeywords}

	\section{Introduction} 
	{The study of networked autonomous multi-agent interactions has received a lot of interest, in   recent years, in the areas such as  surveillance, rescue and combat missions,   navigation, and analysis of biological behaviors; see \cite{GHu:2017,Gupta:2017,Nguyen:2017,Lanzon:2020,Huang:2020}}.  This paper 
	is motivated by strategic situations observed in the practical engineering applications such as, protection of critical infrastructures (e.g., aircrafts, naval ships, power grid) against attacks from incoming missiles, interceptors defending an asset against intrusions, and biological interactions involving protection of the young from predators.
	
	The above situations are  analyzed using the mathematical framework of pursuit-evasion games \cite{isaacs65a} with two or three players (groups).
	A {two-player} (group) interaction is referred to as a pursuit evasion (PE) game.
	Here, the objective of the pursuer is to capture the evader which tries to avoid being captured by the pursuer. A {three-player} (group) interaction is referred to as 
	a Target-Attacker-Defender (TAD) game  \cite{Boyell:1976a}, \cite{Boyell:1980a}. Here, the goal of the attacker is to \emph{capture} the target which tries to evade the attacker, and the goal of the defender is to \emph{intercept} the attacker before the attacker captures the target. {Three-player} interactions resulting in the \emph{rescue} of the target by the defender  have been studied recently \cite{oyler2016pursuit}. Clearly, a TAD game is far more complex than a PE game in that the former involves two simultaneous PE type interactions resulting in more outcomes. 
	\subsection{Contributions}    The existing literature on TAD games assume that all the players have unlimited sensing capabilities which allow them to observe other players during their interactions. However, in the real-world, a player (an engineered agent) has limited sensing capabilities, and can observe other players only when they
	are within its sensing range. For example, this situation occurs when a team of unmanned ground vehicles (UGVs), equipped with inferior sensing capabilities, must safeguard an asset from potential attacks by a well equipped UGV with superior sensing capabilities. Further, limited sensing situations can also arise due to potential failures in the sensing equipment during interactions.   
	
	The novelty of this paper lies in the study of TAD games that involve  players with limited visibility capabilities.  In many real-world applications in civilian or military settings, limited visibility is  an important challenge to address, and has practical implications leading to questions such as  (i) how would players adapt their strategies to the evolving visibility information during their interactions? (ii) under what informational assumptions can the players synthesize their implementable strategies? This paper aims to address these questions by developing a general framework for analyzing TAD interactions with limited observations.  We use differential game methodology, more specifically the  Game of  Degree approach, for modeling interactions among the players; see  \cite{isaacs65a}, \cite{li2011defending}. We note that players' visibility constraints induce a (dynamic) directed network which captures the evolution of visibility information. To be deemed implementable,  the strategies of the  players must be adapted to this information. 
	
	To  address question (i), we introduce the notion of network adapted feedback strategies or the implementable strategies for the visibility constrained players.   A TAD game with limited observations can have many possible interaction configurations.  In this paper, we focus on two scenarios. In the first, we assume that the attacker and the target have  unlimited visibility range. We assume that the (multiple) defenders have limited visibility range, and due to which they act as a team. We model this interaction as a non-zero-sum linear quadratic differential game. In the second, we assume that both the target and the defenders are visibility constrained, due to which they act as a team against the attacker who has an unlimited visibility range. We model this interaction as a zero-sum linear quadratic differential game. We emphasize that our choice of scenarios is canonical, leading to non-zero-sum and zero-sum dynamic game models, and using the framework developed in this paper, the other interaction configurations can be studied. 
	
	To address question (ii), we assume that players use feedback Nash equilibrium strategies 
	as an outcome of their interactions. Since the visibility constrained players cannot have complete observations, we synthesize their network adapted feedback Nash strategies  using an inverse game theory approach, in Theorems \ref{thm:thm1} and \ref{thm:thm2}. Being an inverse problem, we obtain a plethora of implementable strategies. Then, based on the idea that when all the players can observe others, their implementable strategies must be same as their standard feedback Nash strategies, we develop an information consistency criterion for selecting the implementable strategies. In Theorem \ref{thm:consistency}, we provide an optimization based approach for synthesizing the implementable strategies. Further, in Theorem \ref{thm:delayproperty}, we perform sensitivity analysis of visibility radii to analyze the effect of visibility parameters on these strategies. 
	
	The paper is organized as follows. Preliminaries and problem formulation are presented in  section \ref{sec:problem_formulation}.   We analyze the first variation of the TAD game in section \ref{sec:diffgame}, and the second variation in \ref{sec:feedback_zsg}. In section \ref{sec:optimization}, we introduce information consistency based procedure for selecting the feedback gains of the visibility constrained players. In section \ref{sec:simulations}, we illustrate our results with numerical simulations. Section \ref{sec:conclusions} provides concluding remarks and a summary of future research.
	\subsection{An overview of related  literature }
	\label{sec:literature} 
	TAD type interactions were studied in \cite{Boyell:1976a,Boyell:1980a} in the context of defending ships from an incoming torpedo using counter-weapons. 
	A TAD type interaction referred to as the lady, the bandits and the body-guards was proposed in \cite{rusnak2005lady}.
	In \cite{Rubinsky:2014three}, the authors study a TAD terminal game and propose attacker strategies for evading the defender while continuing to pursue the target. In \cite{li2011defending}, the authors study the problem of defending an asset, by modeling the interactions as a linear quadratic differential game, and proposed moving horizon strategies for different configurations of the target. In \cite{Venkatesan:2015new}, a guidance law for defending a non-maneuverable aircraft is proposed.   In \cite{Ratnoo:2011line}, \cite{Ratnoo:2012guidance}, \cite{Shima:2011optimal}, \cite{Prokopov}, \cite{Perelman}, \cite{Shaferman}, the authors study the problem of defending aircrafts from an incoming homing missile using defensive missiles by considering various interaction scenarios.   	
	{In  \cite{Garcia:2017c,Garcia:2018a,Garcia:2019, Pachter:2019toward},} the authors  study interactions where a homing missile tries to pursue an aircraft, and a defender missile aims at intercepting the attacker. In particular, 
	they study 	cooperative mechanisms between the target-defender team against the attacker so that the defender can intercept the attacker before the attacker can capture the target.   	Role switching  of attacker in TAD games was studied recently in \cite{Liang:2020}. In  \cite{Fuchs:2016generalized}, the defender's strategies force the attacker to retreat instead of engaging the target. In \cite{Weiss:2017combined}, the authors study the possibility of role switch as well as the cooperation between the target and defender. The recent tutorial article \cite{TutorialPETAD:2020} provides a survey of PE and TAD differential games.  {In all the above TAD game related works}, all the  players are assumed to have unlimited observations without visibility constraints.
	
	In the context of PE games, {\cite{Bopardikar:2008}, \cite{LaValle:2001}, \cite{lin2015nash},} analyze interactions involving players with limited sensing capabilities. In \cite{lin2015nash}, the authors study a PE interaction between one evader with unlimited observation range and multiple pursuers with limited visibility capabilities. This is the closest reference we could find related to our work. {This paper  differs from \cite{lin2015nash} both in scope} and content as a TAD game is far more complex than a PE game. In particular, the differences with \cite{lin2015nash} are as follows. In our work, the structure of network feedback adaptive strategies is provided in a very general setting, in that the feedback gains
	matrices associated with the neighboring defenders are different, {whereas in \cite{lin2015nash},}
	all the feedback gain matrices of the neighboring pursuers are the same. In a TAD game, different interaction configurations can arise among the players, and we have considered these possibilities in our work in sections \ref{sec:feedback_nzsg} and \ref{sec:feedback_zsg}. {Further, we develop the notion of information consistency, which was not studied before in the literature, towards  the refinement of network adapted feedback Nash strategies.} 
	\vskip1ex 
	\noindent 
	\textbf{Notation}:  Throughout this paper, $\mathbb R^n$ denotes the set of $n$- dimensional real  column vectors, and $\mathbb R^{n\times m}$
	denotes the set of $n\times m$ real matrices. The symbol $\otimes$ denotes the Kronecker product. The transpose of a vector or matrix $E$ is denoted by $E^\prime$.  The Euclidean norm of a vector $x\in \mathbb R^n$ is denoted by $||x||_2= \sqrt{x^\prime x}$. For any $x\in \mathbb R^n$ and $ S\in \mathbb R^{n\times n}$, we denote the quadratic term $x^\prime S x$ by $||x||_S^2$, and the Frobenius norm of $S$ by $||S||_f=\sqrt{\text{trace}(S^\prime S)}$. $\mathbf{0}_{m\times n}$ denotes the $m\times n$ matrix with all its entries equal to zero. $\mathbf{I}_n$ denotes identity matrix of size $n$, and $\mathbf{e}_n^i$ denotes the $i$th column of $\mathbf{I}_n$.  $\mathbf{1}_n$ denotes the $n\times 1$ vector with all its entries equal to $1$.
	$\col{e_1,\cdots,e_n}$ denotes the single vector or matrix obtained by stacking
	the vectors or matrices $e_1,\cdots,e_n$ vertically. $\diag{e_1,\cdots,e_n}$
	denotes the block diagonal matrix obtained by  taking the matrices $e_1,\cdots,e_n$ as diagonal elements in this sequence.  	
	A directed network (or a graph) is denoted by a pair $\mathcal G:=(\mathcal V, \mathcal E)$. $\mathcal V=\{v_1,\cdots,v_n\}$ denotes the set of vertices, and
	$\mathcal E\subseteq \{(v_i,v_j)\in \mathcal V\times \mathcal V~|~v_i,v_j \in \mathcal V,~i\neq j\}$ denotes the set of directed edges without self-loops. A directed edge 	of $\mathcal G$, from $v_i$ to $v_j$, is denoted by $v_i\rightarrow v_j:=(v_i,v_j)$.
	
		\section{Preliminaries and Problem formulation}\label{sec:problem_formulation}
	\subsection{Dynamics and interactions of the players}
	We consider a team of $n$ defenders denoted by $\mathcal{D}:=\{d_1, d_2, \cdots, d_n\}$, the target by $\tau$, and  the attacker by $a$. The set of players is denoted by $\mathcal P :=\mathcal D~\cup~ \{\tau ,a\}$. We assume that the players interact in a two-dimensional plane. The dynamics of each player is governed by the following single integrator dynamics
	\begin{align}	  {
		\begin{bmatrix} \dot{x}_p(t)& \dot{y}_p(t)\end{bmatrix}^\prime= 
		\begin{bmatrix} u_{px}(t) &u_{py}(t)\end{bmatrix}^\prime,}  
	\end{align}
	where $(x_p(t),y_p(t))\in \mathbb{R}^2$ is the position vector of the player $p\in \mathcal P$ at time $t$, $(u_{px}(t),u_{py}(t))\in \mathbb{R}^2$ represents the  control input  of player $p$ at time $t$, and $(x_{p0},y_{p0})\in \mathbb{R}^2$ represents the initial position vector of player $p$. We denote the state and control vector of player $p\in \mathcal P$ as
	\begin{equation}\label{eq:position}
	 {X_p(t)=\begin{bmatrix}
		x_p(t)&	y_p(t)
		\end{bmatrix}^\prime,~~u_p(t)=\begin{bmatrix}u_{px}(t)&
		u_{py}(t)
		\end{bmatrix}^\prime.}
	\end{equation}
	 {In this paper, we study the following two variations of TAD-type interactions among the players.}		 
	 {
		\begin{enumerate}[label=\textbf{I\arabic*.},wide, labelindent=0pt] 
			\item   \begin{enumerate} [labelindent=0pt,wide]
				\item The attacker and target have unlimited visibility range. The visibility constrained defenders cooperate as a team  and try to intercept the attacker. A \textit{non-suicidal} attacker tries to evade the defenders while trying to capture the target, whereas a \textit{suicidal} attacker--not interested in its survival--only tries to capture the target. The target always tries to evade the attacker.
				\item The attacker and target are not aware of defenders' visibility constraints. All the defenders are aware of their own visibility constraints as well as unlimited visibility capabilities of the attacker and target. Further, defenders also know that the attacker and target are unaware of defenders'  visibility constraints. 
			\end{enumerate}
			\item  \begin{enumerate}[labelindent=0pt,wide]
				\item The attacker has an unlimited observation range. The target and defenders have visibility constraints and cooperate as a team.  The defenders try to intercept the attacker before it captures the target, which  always tries to evade the attacker.
				\item Same as in \textbf{I1}.b) by replacing the attacker and target with the attacker, and the defenders with defenders-target team.			 		 
			\end{enumerate} 
	\end{enumerate} }
	Besides consideration of multiple defenders, our work distinguishes from the existing literature due to the following features in the interactions.
	\begin{enumerate} [label=\textbf{F\arabic*}.,wide,labelindent=0pt]
		\item  {The visibility constrained players act as a team in achieving their objectives}. The limited observation range of the   {visibility constrained players} induce  a time-varying (directed) visibility network which captures information of the state variables that are accessible to the  {visibility constrained players} during the interactions.
		\item Whenever the  {visibility constrained players}  cannot see other players, the interaction results in a situation where the available information  is asymmetric, that is,   {visibility constrained players} have private information about their visibility capabilities.
	\end{enumerate}
	 {Due to the nature of interactions, the state space can be reduced using relative coordinates. That is, at any time instant $t$, we denote by $z_{p}(t)=X_{p}(t)-X_a(t)$  the displacement vector between the player $p$ ($p\in \mathcal P\backslash \{a\}$) and the attacker $a$. The global state vector associated with the reduced
		state space be denoted by $z(t)=\col{z_{d_1}(t),\cdots, z_{d_n}(t),z_\tau(t)}\in \mathbb R^{2(n+1)}$. Using this, the dynamic interaction environment of the players can be written compactly as
		\begin{align}\label{eq:stateequation}
		\dot{z}(t)=\sum_iB_{d_i}u_{d_i}(t)+B_\tau u_\tau(t)+B_au_a(t),
		\end{align}
		where   $B_{d_i}=\begin{bmatrix}{\mathbf{e}_n^i}^\prime & 0 \end{bmatrix}^\prime \otimes \idm{2}$, $B_\tau =\begin{bmatrix} \zdm{1\times n} & 1\end{bmatrix}^\prime \otimes \idm{2}$, and $B_a =-\odm{n+1} \otimes \idm{2}$.}
	
	\subsection{Network induced by visibility constraints}
	\label{sec:network}
	 {We assume that a visibility constrained player $p$ ($p\in \mathcal P\backslash \{a\})$ can see a player $q\in \mathcal P \backslash \{p\}$, at time $t$, when the player $q$ lies within player $p$'s observation radius $\zeta_{p}>0$, that is,  when the following condition holds true
		\begin{align} 
		||X_p(t)-X_q(t)||_2=||z_p(t)-z_q(t)||_2\leq \zeta_{p}. \label{eq:limob}
		\end{align}
		We set  $\zeta_\tau=\infty$ ($\zeta_\tau<\infty$) as target has unlimited  (limited) visibility range in the interaction \textbf{I1} (\textbf{I2})
		The above constraint induces a time-varying directed  network $\mathcal G(t):=(\mathcal P, \mathcal E(t))$, where an outgoing edge $p\rightarrow q\in \mathcal E(t)$ indicates that  a player $q\in \mathcal P\backslash \{p\}$ is visible to the visibility constrained player $p$ at time $t$. Fig. \ref{fig:playerINTgraph} illustrates the visibility network associated with the interactions given in Fig. \ref{fig:playerINT}}. 

		 \begin{figure}[h] 
		 	\begin{subfigure}[b]{.5\textwidth}\centering 
		 		\includegraphics[scale=1.25]{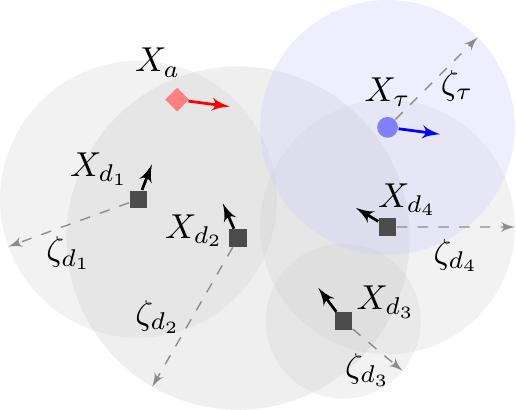}
		 		\caption{Interactions between the players} 
		 		\label{fig:playerINT} 
		 	\end{subfigure}	 
		 	\begin{subfigure}[b]{.5\textwidth}\centering 
		 		\includegraphics[scale=1.125]{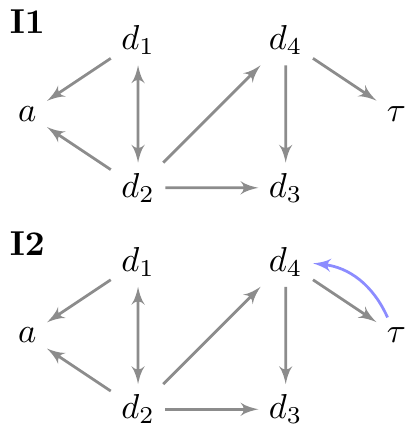}
		 		\caption{Directed network induced by the visibility information of the players} 
		 		\label{fig:playerINTgraph} 
		 	\end{subfigure}
		 	\caption{In panel (a) the limited observation range of the defenders (target) is illustrated by the light gray (blue) regions. In panel (b).I1   the outgoing edges from the attacker and the target are omitted they can see all the players in interaction \textbf{I1} ($\zeta_\tau=\infty$). Similarly, in panel (b).I2 outgoing edges from the attacker are suppressed as it can see all the players in the interaction \textbf{I2} ($\zeta_\tau<\infty$).}  
		 	\label{fig:int_game}
		 \end{figure}

  \subsection{Termination criterion}
 Let the positive real numbers $\sigma_{d_i}<\zeta_{d_i}$ ($d_i\in \mathcal D$) and $\sigma_a$ denote the capture radii of the defender $d_i$ and the attacker $a$ respectively. The interactions terminate when a defender intercepts the attacker, that is,  whenever $||X_{d_i}(t)-X_a(t)||_2\leq \sigma_{d_i}$ holds for at least one $d_i\in \mathcal D$, or when the attacker captures the target, that is, whenever $||X_a(t)-X_\tau(t)||_2\leq \sigma_a$ holds.   
 
 \subsection{Problem statement} {Let $T<\infty$ denotes the duration of the interaction. 
 	We assume that the duration of the interaction is large enough so that the termination criterion is satisfied during the interval $ [0,T]$. We seek to determine the control strategies which can be used by the players during the interactions \textbf{I1} and \textbf{I2}.}  {In the next two sections, using the differential game approach \cite{Basar:99}, we analyze the interactions  \textbf{I1} and \textbf{I2}.}
		  \section{ {Analysis of interaction \textbf{I1}}} 
		  \label{sec:diffgame}
		   {Recall that the visibility constrained defenders act as a team in the interaction \textbf{I1}.  We denote  the control input for team  of defenders be denoted by $u_d(t)=\col{u_{d_1}(t), \cdots, u_{d_n}(t)} \in \mathbb R^{2n}$. Using this, the dynamic interaction environment of the players  is written as
		  \begin{align}\label{eq:state_eq}
		  \dot{z}(t)=B_{d}u_{d}(t)+B_\tau u_\tau(t)+B_au_a(t),
		  \end{align}
		  where   $B_{d}=\begin{bmatrix}\idm{n} & \zdm{n\times 1} \end{bmatrix}^\prime \otimes \idm{2}$.} 		  The objectives of the players are described as follows. 
		  The target maximizes its weighted distance with the attacker during the time period $[0,T]$. The defenders jointly minimize the sum of their individual weighted distances with the attacker during the interval $[0,T]$. The objective of the non-suicidal attacker is to maximize the sum of its weighted distances with the defenders while simultaneity minimizing its weighted distance with the target during the time period $[0,T]$. Whereas a suicidal attacker minimizes its distance with the target during the time period $[0,T]$. All the players simultaneously minimize their control efforts during the time period $[0,T]$. Using their    controls $u_d(.)$,  defenders jointly minimize the following objective function
		  subject to \eqref{eq:state_eq}
\begin{subequations} 
	\begin{align} 
J_d(u_d(.), u_\tau(.),u_a(.))&=\frac{1}{2} \sum_{j=1}^{n} ||z_{d_j}(T)||^2_{F_{d_ja}}
	 + \frac{1}{2} \int_{0}^{T} \Big( \sum_{j=1}^{n} ||z_{d_j}(t)||^2_{Q_{d_ja}} + ||u_{d_j}(t)||^2_{R_{d_j}}  \Big)dt \nonumber\\
	& =\frac{1}{2} ||z(T)||^2_{F_{d}} + \frac{1}{2} \int_{0}^{T} \Big( ||z(t)||^2_{Q_{d}} + ||u_{d}(t)||^2_{R_{d}} \Big)dt. \label{eq:obj:def}   
	\end{align}
	Using the controls $u_\tau(.)$, the target minimizes the following objective function subject to \eqref{eq:state_eq}
	\begin{align} 
 J_\tau(u_d(.), u_\tau(.),u_a(.)) &=-\frac{1}{2} ||z_{\tau}(T)||^2_{F_{\tau a}} + \frac{1}{2} \int_{0}^{T}  \Big(  ||u_{\tau}(t)||^2_{R_{\tau}}-||z_{\tau}(t)||^2_{Q_{\tau a}}   \Big)dt\nonumber\\
	&=\frac{1}{2} ||z(T)||^2_{F_\tau} + \frac{1}{2} \int_{0}^{T} \Big( ||z(t)||^2_{Q_{\tau}}+ ||u_{\tau}(t)||^2_{R_{\tau}} \Big)   dt. \label{eq:obj:tar}
	\end{align} 
	Using the controls  $u_a(.)$, the attacker minimizes the following objective function subject to \eqref{eq:state_eq}
	\begin{align}
	&J_a(u_d(.), u_\tau(.),u_a(.))=
	\frac{1}{2} ||z_{\tau}(T)||^2_{F_{a\tau}}  -\frac{\lambda}{2}\sum_{j=1}^{n}||z_{d_j}(T)||^2_{F_{ad_j}} \nonumber \\     &+ \frac{1}{2} \int_{0}^{T} \Big( ||z_{t}(t)||^2_{Q_{a\tau}}-\lambda \sum_{j=1}^{n}||z_{d_j}(t)||^2_{Q_{ad_j}}+ ||u_{a}(t)||^2_{R_{a}}  \Big)dt\nonumber\\
	&=\frac{1}{2} ||z(T)||^2_{F_{a}} + \frac{1}{2} \int_{0}^{T} \Big( ||z(t)||^2_{Q_{a}} + ||u_{a}(t)||^2_{R_{a}}  \Big)dt. \label{eq:obj:atta}
	\end{align} 
	\label{eq:fullobj}
\end{subequations} 
		  where $F_{ap}=f_{ap}\idm{2}$, $F_{pa}=f_{pa}\idm{2}$, $Q_{ap}=q_{ap}\idm{2}$, $Q_{pa}=q_{pa}\idm{2}$, $p\in \mathcal D\cup \{\tau\}$ and  {$R_p=r_p \idm{2}$, $p\in  \{\tau,a \}$} with $f_{ap}>0$, $f_{pa}>0$, $q_{ap}>0$, $q_{pa}>0$ and $r_p>0$. Using these, the matrices associated with the terminal costs are given by $F_d=\diag{f_{d_1 a},\cdots,f_{d_na},0}\otimes \idm{2}$,
		  $F_\tau=\diag{\zdm{2n\times 2n},-f_{\tau a}\idm{2}}$, and $F_a=\diag{-\lambda f_{ad_1 },\cdots,-\lambda f_{ad_n},f_{a\tau}}\otimes \idm{2}$.
		  The matrices associated with the instantaneous costs are given by 
		  $Q_d=\diag{q_{d_1 a},\cdots,q_{d_na},0}\otimes \idm{2}$, 
		  $Q_\tau=\diag{\zdm{2n\times 2n},-q_{\tau a}\idm{2}}$, and $Q_a=\diag{-\lambda q_{ad_1 },\cdots,-\lambda q_{ad_n},q_{a\tau}}\otimes \idm{2}$.
		  Finally, the control cost parameter $R_d$ is given by 
		  $R_d=\diag{r_{d_1},\cdots , r_{d_n}}\otimes \idm{2}$. The parameter $\lambda\in \{0,1\}$ in \eqref{eq:obj:atta}   is set to $\lambda=1$ ($\lambda=0$) for a non-suicidal (suicidal) attacker. 
		  
		  As reflected in the players’ objectives, both conflict and cooperation co-exist from the strategic interaction of players, and necessitates the analysis using non-zero sum differential games  \cite{Basar:99}.  {In particular, the state dynamics and objectives of the players defined in \eqref{eq:state_eq}-\eqref{eq:fullobj} constitute a non-zero-sum linear quadratic differential game (NZLQDG)}; see \cite{Basar:99} and \cite{engwerda2005lq}. In a differential game, the strategies or the controls used by players depend upon the information available to them during the game, also  referred to as information structure. In the  feedback information structure, the control of a player $p\in \mathcal P$ at time $t\in[0,T]$ is a function of time $t$ and state variable $z(t)$, that is, $u_p(t)=\gamma_p(t,z(t))$, where the mapping $\gamma_p:[0,T]\times \mathbb{R}^{2(n+1)}\rightarrow \mathbb{R}^2$ is a strategy of the player $p$. In this paper, we assume  feedback strategies due to their robustness towards perturbations in the state variable, also referred to as strong time consistency property; see \cite[definition 5.14]{Basar:99}. Due to linear dynamics \eqref{eq:state_eq} and quadratic objectives \eqref{eq:fullobj}   we restrict to linear feedback strategies, and the set of feedback strategies of a player $p\in \mathcal P$ is  given by 
		  \begin{multline}
		  \Gamma_p:=\Bigl\{ \gamma_p:[0,T]\times \mathbb R^{2(n+1)} \rightarrow \mathbb R^2 ~\Bigl|~ u_p(t)=\gamma_p(t,z(t))=G_p(t)z(t),\\G_p(t)\in \mathbb R^{2\times 2(n+1)}~t\in[0,T]\Bigr\}.
		  \label{eq:fbstatset}
		  \end{multline}    
		  Since the defenders act as a team, we denote by $\gamma_d=\col{\gamma_{d_1}, \cdots,\gamma_{d_n}}$ and 
		  $\Gamma_d=\Gamma_{d_1}\times  \cdots \times \Gamma_{d_n}$
		  as their joint feedback strategy  and joint feedback strategy set respectively.

		  \subsection{ {Network adapted feedback information structure}}\label{sec:feedback_nzsg}
		   {In this subsection} we motivate the need for modifying the information structure of the defenders due to their visibility constraints. 	As the attacker, target, and the team of defenders  individually minimize their interrelated objectives, Nash equilibrium \cite{Basar:99} is a natural choice for the outcome of  {NZLQDG}.   The Nash equilibrium in feedback strategies is defined as follows.
		  \begin{definition}[Nash equilibrium]
		  	The strategy profile $(\gamma^*_{d}, \gamma^*_{\tau},  \gamma^*_{a})$ is a feedback Nash equilibrium  (FNE) for  {NZLQDG} if the following set of inequalities hold true
		  	\begin{subequations} 
		  		\begin{align}
		  		&J_d(\gamma^*_d,\gamma^*_\tau,\gamma^*_a) \leq J_d(\gamma_d,\gamma^*_\tau,\gamma^*_a), ~\forall \gamma_d\in  \Gamma_d,\\
		  		&J_\tau(\gamma^*_d,\gamma^*_\tau,\gamma^*_a) \leq J_\tau(\gamma_d^*,\gamma_\tau,\gamma^*_a), ~\forall \gamma_\tau\in  \Gamma_\tau,\\
		  		&J_a(\gamma^*_d,\gamma^*_\tau,\gamma^*_a) \leq J_a(\gamma^*_d,\gamma^*_\tau,\gamma_a), ~\forall \gamma_a\in  \Gamma_a.
		  		\end{align}
		  	\end{subequations} 
		  \end{definition}
		  The following theorem from \cite{engwerda2005lq} characterizes FNE.
		  \begin{theorem}{\cite[Theorem 8.3]{engwerda2005lq}} Consider the $(n+2)$--player  finite horizon  {NZLQDG} described by \eqref{eq:state_eq}-\eqref{eq:fullobj}. This game has,  for every initial state, a linear FNE if and only if the following set of coupled Riccati differential equations (RDE) has a set of symmetric solutions $\{P_d(t), P_\tau(t), P_a(t)\}$ on $[0,T]$
		  	\begin{subequations}
		  		\begin{align}
		  		\dot{P_d}(t)&=P_d(t) S_d P_d(t)+P_d(t) S_\tau P_\tau (t)+P_d (t)S_a P_a(t)\notag\\
		  		&\quad +P_\tau (t) S_\tau P_d(t)+P_a(t) S_a P_d (t)-{Q}_d, \label{eq:fbne1}\\
		  		\dot{P_\tau }(t)&=P_\tau (t) S_d P_d(t)+P_\tau (t) S_\tau P_\tau (t)+P_\tau (t) S_a P_a(t) \notag\\&\quad +P_d(t) S_d P_\tau (t) 
		  		+P_a(t) S_a P_\tau (t)-Q_t, \label{eq:fbne2}\\
		  		\dot{P_a}(t)&=P_a(t) S_d P_d(t)+P_a(t) S_\tau P_\tau (t)+P_a(t) S_a P_a(t)\notag\\ 
		  		&\quad		+P_d(t) S_d P_a(t)+P_\tau (t) S_\tau P_a(t)-Q_a \label{eq:fbne3}
		  		\end{align}
		  		\label{eq:Riccati}
		  	\end{subequations}
		  	with $P_d(T)= {F}_{d}$, $P_\tau(T)= {F}_{\tau}$ and $P_a(T)= {F}_{a}$. 
		  	Moreover, in that case there is a unique equilibrium. The FNE control actions of player $p \in \{d,\tau,a\}$ is given by
		  	\begin{align}
		  	\label{eq:1d}
		  	u^*_{p}(t)=\gamma_p^*(t,z(t))=-R_{p}^{-1}B_p^\prime  P_p(t)z(t).
		  	\end{align}
		  	\label{thm:fbne}
		  \end{theorem}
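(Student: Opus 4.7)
The plan is to follow the classical dynamic programming approach to linear-quadratic differential games: postulate a quadratic value function for each player, reduce the coupled Hamilton--Jacobi--Bellman (HJB) system to the matrix Riccati system \eqref{eq:Riccati}, and then verify the Nash inequalities by a completion-of-squares argument. Concretely, for each $p\in\{d,\tau,a\}$ I would take the ansatz $V_p(t,z)=\tfrac12 z'P_p(t)z$ with $P_p$ symmetric and terminal data $P_p(T)=F_p$. This ansatz is natural because the stage costs in \eqref{eq:fullobj} are quadratic in $(z,u_p)$ and the dynamics \eqref{eq:state_eq} are linear in $(z,u)$, so quadratic value functions form an invariant class under the Bellman operator; it is also consistent with the a priori restriction to linear feedback strategies in \eqref{eq:fbstatset}.

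For necessity, fix the opponents' equilibrium strategies and write the HJB equation for player $p$,
\begin{equation*}
-\frac{\partial V_p}{\partial t}=\min_{u_p}\Bigl\{\tfrac12 z'Q_p z+\tfrac12 u_p'R_p u_p+(\nabla_z V_p)'\Bigl(B_p u_p+\sum_{q\ne p}B_q u_q^{*}\Bigr)\Bigr\}.
\end{equation*}
Substituting the ansatz gives $\nabla_z V_p=P_p(t)z$, and the unconstrained inner minimization (using $R_p\succ 0$) immediately yields the feedback rule \eqref{eq:1d}. Inserting this minimizer back, together with $u_q^{*}=-R_q^{-1}B_q'P_q(t)z$ for the opponents, every remaining term becomes a quadratic form in $z$. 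Setting $S_q:=B_q R_q^{-1}B_q'$, matching coefficients of the quadratic form, and symmetrizing each cross product $P_p S_q P_q$ as $\tfrac12(P_p S_q P_q+P_q S_q P_p)$ reproduces exactly \eqref{eq:fbne1}--\eqref{eq:fbne3} with the terminal conditions $P_p(T)=F_p$.

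For sufficiency and uniqueness, assume symmetric solutions $P_d,P_\tau,P_a$ of \eqref{eq:Riccati} exist on $[0,T]$ and fix any admissible unilateral deviation $\gamma_p\in\Gamma_p$ while holding the other players at the candidate equilibrium. Differentiating $\tfrac12 z'P_p(t)z$ along the perturbed trajectory, integrating over $[0,T]$, and invoking the Riccati equation for $P_p$ to cancel the cross terms produces the identity
\begin{equation*}
J_p(\gamma_p,\gamma_{-p}^{*})=\tfrac12 z_0'P_p(0)z_0+\tfrac12\int_0^T\bigl\|u_p(t)+R_p^{-1}B_p'P_p(t)z(t)\bigr\|_{R_p}^{2}\,dt,
\end{equation*}
whose unique minimizer over $\Gamma_p$ is precisely \eqref{eq:1d}. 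Applied for each $p\in\{d,\tau,a\}$, this verifies the three Nash inequalities and, because $R_p\succ 0$ makes the integrand strictly convex in $u_p$, also delivers uniqueness within the linear feedback class.

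I expect the principal obstacle to be the bookkeeping in the HJB collapse: one must track every quadratic cross term arising both from substituting $u_p^{*}$ into the $\tfrac12 u_p'R_p u_p$ and $(\nabla_z V_p)'B_p u_p$ pieces and from the opponents' feedbacks $u_q^{*}$, and symmetrize them so that the two summands $P_p S_q P_q$ and $P_q S_q P_p$ appear with the correct signs in \eqref{eq:fbne1}--\eqref{eq:fbne3}. Global solvability of the coupled Riccati system on $[0,T]$, which would otherwise be a delicate analytic issue for nonzero-sum problems, is the very "if" hypothesis of the theorem, so no extension or compactness argument is required in the proof itself.
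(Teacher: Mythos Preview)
Your proposal is correct and follows the standard dynamic-programming/completion-of-squares route for linear--quadratic feedback Nash equilibria. Note, however, that the paper does not supply its own proof of this statement: Theorem~\ref{thm:fbne} is quoted directly from \cite[Theorem~8.3]{engwerda2005lq} and used as a black box. Your argument is essentially the textbook proof given in that reference, so there is no genuine methodological difference to discuss. It is worth observing that the very completion-of-squares identity you derive for sufficiency,
\[
J_p(\gamma_p,\gamma_{-p}^{*})=\tfrac12 z_0'P_p(0)z_0+\tfrac12\int_0^T\bigl\|u_p(t)+R_p^{-1}B_p'P_p(t)z(t)\bigr\|_{R_p}^{2}\,dt,
\]
is precisely the device the paper exploits in its proof of Theorem~\ref{thm:thm1} (see \eqref{eq:Jbseq}--\eqref{eq:Jcseq}), so your approach dovetails with the techniques used downstream.
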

		  Due to their unlimited observation range, the attacker and target can implement their FNE strategies \eqref{eq:1d} as they have access to the state information $z(t)$ for all $t\in[0,T]$. The FNE strategy of the defender $d_i$ $(d_i\in \mathcal D)$ can be rewritten as 
		 \begin{align}
		 u_{d_i}^*(t)=\left( {\mathbf{e}_n^i}^\prime \otimes \idm{2} \right)u_d^*(t)   &=-r_{d_i}^{-1}\Bigl[\sum_{j=1}^{n+1}  {P_d^{ij}(t)}\Bigr] z_{d_i}(t) 
		 +r_{d_i}^{-1}  {P_d^{i (n+1)}(t)}\left(z_{d_i}(t)-z_\tau(t)\right)
		 \notag\\
		 &\quad +r_{d_i}^{-1}\sum_{j=1}^{n} {P_d^{ij}(t)}\left( z_{d_i}(t)-z_{d_j}(t)\right),
		 \label{eq:conimpl}
		 \end{align}	 
		  where matrix $P^{ij}_d(t)$ is the $i$th row and $j$th column element (a block matrix)
		  obtained by partitioning the matrix $P_d(t)$ into block matrices of dimension $2\times 2$. 
		  \begin{remark}
		  	From \eqref{eq:conimpl}, it is clear that strategy of the defender $d_i$
		  	not only depends on the visibility of the attacker and target, but also on the visibility of the other defenders  {$\mathcal D\backslash \{d_i\}$}. This implies, the FNE strategy \eqref{eq:1d} is not implementable by the defenders under limited observations.
		  	More specifically, at time $t$, if a player  {$p\in \mathcal P\backslash \{d_i\}$} lies outside the visibility range of the defender $d_i$ then the coefficient of the term $X_{p}(t)-X_{d_i}(t)= z_p(t)-z_{d_i}(t)$ in \eqref{eq:conimpl} must be zero for the defender $d_i$ to implement the FNE control \eqref{eq:conimpl} at time $t$.
		  	In other words, the feedback strategies \eqref{eq:fbstatset} of the defenders  must be \textit{adapted} to visibility  network $\mathcal G(t):=(\mathcal P, \mathcal E(t))$, induced by \eqref{eq:limob} at every time instant $t\in [0,T]$, to be deemed implementable.
		  \end{remark} 
		   {We recall  that at any time $t\in[0,T]$ an outgoing edge $d_i\rightarrow p\in \mathcal E(t)$ in the visibility network  $\mathcal G(t):=(\mathcal P, \mathcal E(t))$ 
		  indicates that the defender $d_i$ can see the player $p\in \mathcal P\backslash \{d_i\}$ at time $t$, whenever $||X_{d_i}(t)-X_{p}(t)||_2=||z_{d_i}(t)-z_{p}(t) ||_2\leq \zeta_{d_i}$. 
		  Using this, for the defender $d_i$ ($d_i\in \mathcal D$) to indicate the visibility of a player $q$ ($q\in \{a,\tau\}$) at time $t\in[0,T]$, we define the binary function $\phi^q_{d_i}:[0,T]\rightarrow \mathbb R$ ($q\in \{a,\tau\}$)  as
		  \begin{equation}
		  \phi^q_{d_i}(t)=\begin{cases}
		  1 & d_i\rightarrow q \in \mathcal E(t)\\
		  0 & d_i\rightarrow q \notin \mathcal E(t)
		  \end{cases}, \quad q\in \{a,\tau\}.
		  \label{eq:h1}
		  \end{equation}}
		  To indicate the visibility of other defenders   {$d_j\in \mathcal D\backslash \{d_i\}$}
		  we define the following  binary matrix function $\mathrm{Ad}:[0,T]\rightarrow \mathbb R^{n\times n}$ 
		  with $ij$th entry defined as
		  \begin{equation}
		  [\mathrm{Ad}(t)]_{ij}=\begin{cases}
		  1 & {d_i}\rightarrow d_j \in \mathcal E(t)\\
		  0 & {d_i}\rightarrow d_j \notin \mathcal E(t). 
		  \end{cases}
		  \label{eq:Adij} 
		  \end{equation}
		  Here, $\mathrm{Ad}(t)$ represents the out-degree adjacency matrix associated with
		  the sub-network of $\mathcal G(t)$ with $\mathcal D$ as the vertex set.	Due to the reduced state space, we can define the $n\times (n+1)$ augmented adjacency matrix as follows

		  		  \begin{subequations}		   
		  	\begin{align}
		  	{\mathcal A} (t)=\begin{bmatrix} \Phi_a(t)+ \mathrm{Ad}(t) &\Phi_\tau(t)\end{bmatrix}, 
		  	\end{align} 
		  	where 
		  	\begin{align}
		  	&\Phi_a(t)=\diag{\phi_{d_1}^a(t),\phi_{d_2}^a(t),\cdots,\phi_{d_n}^a(t)}, \\
		  	&\Phi_\tau(t)=\text{col}\{\phi_{d_1}^\tau(t),\phi_{d_2}^\tau(t),\cdots,\phi_{d_n}^\tau(t)\}.
		  	\end{align}
		  	\label{eq:extadj}
		  \end{subequations}
		  The $i$th row of the matrix $\mathcal A(t)$, denoted by $[\mathcal A(t)]_{i\bullet}$, provides
		  the information about all the players $p\in \mathcal P \backslash \{d_i\}$ who are  visible to $d_i$  at time $t\in[0,T]$. In particular, the $i$th and $(n+1)$th elements of $[\mathcal A(t)]_{i\bullet}$ indicate the visibility of the attacker and the target respectively, and the $j$th element, with $j\neq \{i,n+1\}$ indicates the visibility  to the defender $d_j\in \mathcal D\backslash \{d_i\}$.
		  Using this, we define the implementable or network adapted feedback strategies for defenders as follows.
		  \begin{definition} The set of \textit{network adapted} linear feedback strategies of the defender $d_i$ ($d_i\in \mathcal D$) is given by
		  	\begin{multline}
		  	{\Gamma}^{\text{Ad}}_{d_i}:=\Bigl\{{\gamma}^{\text{Ad}}_{d_i}:[0,T]\times \mathbb{R}^{2(n+1)}\times \mathbb{R}^{2\times 2(n+1)}\rightarrow \mathbb{R}^{2}~\Big| ~{u}_{d_i}(t):=\gamma_{d_i}^{\text{Ad}}(t,z(t);K_{d_i}(t))=K_{d_i}(t) \mathcal I_{d_i}(t)z(t),\\K_{d_i}(t)\in \mathbb{R}^{2\times 2(n+1)},~t\in[0,T]\Bigr\}.
		  	\label{eq:nwadapt}			
		  	\end{multline}
		  	and $\Gamma^\text{Ad}_d=\Gamma^\text{Ad}_{d_1}\times   \cdots \times \Gamma^\text{Ad}_{d_n}$ denotes the set of defender team's joint network adapted strategies.
		  	
		  	Here, the matrix function $\mathcal I_{d_i}:[0,T]\rightarrow \mathbb R^{2(n+1)\times 2(n+1)}$  captures state information of players in $\mathcal P\backslash \{d_i\}$  who are visible to defender $d_i$ at time $t$, which is defined by
		  	\begin{align}
		  	\mathcal I_{d_i}(t):= \diag{[\mathcal A(t)]_{i\bullet}}  \left(\idm{n+1}-{\mathbf{e}^i_{n+1}}^\prime\otimes\left[\odm{n+1}-\mathbf{e}^i_{n+1}\right]\right)\otimes \idm{2},
		  	\label{eq:infomatrix}
		  	\end{align}
		  	and the gain matrix $K_{d_i}(t)$ can be partitioned into $n+1$ block matrices of size $2\times 2$ as follows
		  	\begin{align} K_{d_i}(t)=\begin{bmatrix}K_{d_i}^{d_1}(t) &  \cdots& K_{d_i}^{d_n}(t)& K_{d_i}^\tau(t) \end{bmatrix}.
		  	\label{eq:kpart}
		  	\end{align} 
		  \end{definition} 
		  The defender team's joint network adapted linear feedback strategy  given by
		  \begin{subequations} 
		  	\begin{align} 
		  	&\gamma^{\text{Ad}}_d(t,z(t);K_d(t))=K_d(t) \mathcal I_d(t) z(t),\\
		  	& K_d(t)= \diag{K_{d_1}(t),  \cdots, K_{d_n}(t)}, \label{eq:kdm}\\
		  	& \mathcal I_d(t)= \col{\mathcal I_{d_1}(t), \cdots,\mathcal I_{d_n}(t)}.
		  	\end{align}
		  \end{subequations}  
We illustrate the structure of network adapted feedback strategies with the following example.
\begin{example} Consider the multi-player interaction illustrated in Figure \ref{fig:int_game}. Let $t_1$ be the time instant where the interactions of players result in the visibility network  illustrated by Figure \ref{fig:playerINTgraph}. The matrices \eqref{eq:Adij}, \eqref{eq:extadj}, and \eqref{eq:infomatrix} (for defender $d_2$) are given by
	\begin{align*}
	&\mathrm{Ad}(t_1)= \begin{bmatrix}0 &1 &0 & 0\\1&0&1 & 1\\ 0 & 0 &0 &0\\0 & 0& 1&0  \end{bmatrix},~
	\Phi_a(t_1)=\begin{bmatrix}1&   \\   & 1&   \\   &   & 0 &  \\  &   &   & 0\end{bmatrix},~	 \Phi_\tau(t_1)=\begin{bmatrix}0\\0\\0\\1\end{bmatrix}\\
	& \mathcal A(t_1)=
	\begin{bmatrix}1& 1& 0& 0& 0\\ 1 & 1& 1&1&0\\0&0&0&0&0 \\ 0 & 0& 1&0& 1 \end{bmatrix},~\mathcal I_{d_2}(t_1)=
	\begin{bmatrix}1&\\ & 1&\\&&1&\\&&&1\\&&&&0 \end{bmatrix}   \begin{bmatrix}1&-1&0&0&0\\0& 1&0&0&0\\0&-1&1&0&0\\0&-1&0&1&0\\0&-1&0&0&1
	\end{bmatrix}\otimes \idm{2}. 	
	\end{align*}
	The network adapted feedback strategy for defender  $d_2$, using \eqref{eq:kpart}, is calculated as
	\begin{align*}
	\gamma_{d_2}^{\text{Ad}}(t_1,z(t_1);K_{d_2}(t_1))&=K_{d_2}(t_1)
	\mathcal I_{d_2}(t_1)  z(t_1)\\
	&=K_{d_2}^{d_2}(t_1)z_{d_2}(t_1)+K_{d_2}^{d_1}(t_1)(z_{d_1}(t_1)-z_{d_2}(t_1))\\
	&\quad +K_{d_2}^{d_3}(t_1)(z_{d_3}(t_1)-z_{d_2}(t_1))+K_{d_2}^{d_4}(t_1)(z_{d_4}(t_1)-z_{d_2}(t_1)).
	\end{align*}
	Clearly, defender $d_2$ can implement the above feedback strategy as the players $\{a,d_1,d_3,d_4\}$ are visible at $t_1$.\qed 
\end{example}
 \subsection{ {Network adapted feedback Nash equilibrium  strategies}}
 \label{sec:NAFBNS}
 In this subsection we derive network adapted feedback Nash equilibrium (NAFNE) strategies of the defenders using Theorem \ref{thm:fbne}. 
 \begin{remark}  {Due to feature \textbf{F2}, in interaction \textbf{I1},  the attacker and the target being unaware of  defenders' visibility constraints} becomes common knowledge of the game; see \cite{Myerson:97}. As a result, the outcome of the game is that  the attacker and target will use their FNE strategies $u^*_{p}(t)=\gamma_p^*(t,z(t))=-R_{p}^{-1}B_p^\prime  P_p(t)z(t),~p\in\{a,\tau\}$, for all $t\in[0,T]$ associated with the game where all the players have unlimited observation range. Further, defenders also know that attacker and target will use their FNE strategies. 
 	\label{rem:commonknowledge}
 \end{remark} 
 When the defenders use their   network adapted feedback strategy
 $\gamma^\text{Ad}_d $, it is required that the strategy profile $(\gamma^\text{Ad}_d,\gamma_\tau^*,\gamma_a^*)$ is a FNE. However, this strategy profile cannot be a FNE for  {NZLQDG}, with players' objectives given by \eqref{eq:fullobj}, unless all the defenders have unlimited observation range.    
 To see this, we recall that in a game setting, the objectives of the players are interrelated. Further, we note that the defenders, due to lack of full state information, deviate unilaterally from using \eqref{eq:1d} while implementing   $\gamma_d^\text{Ad}$. From Remark \ref{rem:commonknowledge}, as the attacker and target strategies are fixed at their standard FNE strategies \eqref{eq:1d}, the strategy profile  $(\gamma^\text{Ad}_d,\gamma_\tau^*,\gamma_a^*)$ cannot be a Nash equilibrium.   
 This implies,  the performance indices or objectives of the players for which the strategy profile $(\gamma^\text{Ad}_d,\gamma_\tau^*,\gamma_a^*)$ is a  FNE differs from the objectives  \eqref{eq:fullobj}. 
 
  {In the following theorem, we use inverse game theory approach based on strategies obtained in Theorem \ref{thm:fbne}; see also \cite{lin2015nash} in the context of a PE game}.  In particular, we construct a class of performance indices, parameterized by the gain matrices $K_d(t),~t\in[0,T]$ with respect to which the strategy profile $(\gamma^\text{Ad}_d,\gamma_\tau^*,\gamma_a^*)$  is a NAFNE.
 	\begin{theorem}
 		Consider the $(n+2)$--player finite horizon  {NZLQDG} described by \eqref{eq:state_eq}-\eqref{eq:fullobj}. For an arbitrary gain matrix $K_d(t)=\diag{K_{d_1}(t),K_{d_2}(t),\cdots,K_{d_n}(t)},t\in[0,T]$, the strategy profile  $(\gamma^\text{Ad}_d,\gamma_\tau^*,\gamma_a^*)$,  with  {$\gamma_\tau^*(t,z(t))=-R_\tau^{-1}B_\tau^\prime P_\tau(t)z(t)$}, $\gamma_a^*(t,z(t))=-R_a^{-1}B_a^\prime P_a(t)z(t)$ and $\gamma^\text{Ad}_d(t,z(t);K_d(t))=K_d(t)\mathcal I_d(t)z(t)$, $t\in[0,T]$, forms a NAFNE characterized by the inequalities	
 		\begin{align*}
 		&J^\text{Ad}_d(\gamma^\text{Ad}_d,\gamma_\tau^*,\gamma_a^*)\leq J^\text{Ad}_d(\bar{\gamma}^\text{Ad}_d,\gamma_\tau^*,\gamma_a^*),
 		~\forall \bar{\gamma}^\text{Ad}_d\in  \Gamma^\text{Ad}_d\\
 		&J^\text{Ad}_\tau(\gamma^\text{Ad}_d,\gamma_\tau^*,\gamma_a^*)\leq J^\text{Ad}_\tau(\gamma^\text{Ad}_d,\gamma_\tau,\gamma_a^*),
 		~\forall \gamma_\tau \in \Gamma_\tau\\
 		&J^\text{Ad}_a(\gamma^\text{Ad}_d,\gamma_\tau^*,\gamma_a^*)\leq J^\text{Ad}_a(\gamma^\text{Ad}_d,\gamma_\tau^*,\gamma_a),
 		~\forall \gamma_a \in \Gamma_a,
 		\end{align*}	
 		with parametric performance indices $(J^\text{Ad}_d,J^\text{Ad}_\tau,J^\text{Ad}_a;K_d(.))$ given by
 		\begin{subequations} 	
 			\begin{align}
 			J^\text{Ad}_d(u_d(.),u_\tau(.),u_a(.))&=  \frac{1}{2}||z(T)||^2_{ {F}_{d}}+\frac{1}{2}\int_{0}^{T}\Bigl( ||z(t)||^2_{Q^\text{Ad}_d(t)} \notag \\
 			&\hspace{0.1in}+||u_d(t)||^2_{R_d}-u_d^\prime(t) S_1(t)z(t)  -z^\prime(t) S_1^\prime(t) u_d(t)\Bigr) dt,		\label{eq:Jbs} 
 			\\
 			J^\text{Ad}_\tau(u_d(.),u_\tau(.),u_a(.))&=\frac{1}{2}||z(T)||^2_{ {F}_{\tau}}  +\frac{1}{2}\int_{0}^{T}
 			\left(||z(t)||^2_{Q^\text{Ad}_\tau(t)}+||u_\tau(t)||^2_{R_\tau}\right)dt,
 			\label{eq:Jas} \\
 			J^\text{Ad}_a(u_d(.),u_\tau(.),u_a(.))&=
 			\frac{1}{2}||z(T)||^2_{ {F}_{a}} +\frac{1}{2}\int_{0}^{T}\hspace{-0.09in}\left(||z(t)||^2_{Q^\text{Ad}_a(t)}+||u_a(t)||^2_{R_a}\right)dt,
 			\label{eq:Jcs} 
 			\end{align}
 			\label{eq:Jseq}
 		\end{subequations} 	   
 		where 
 		 { 
 		\begin{subequations} 
 			\begin{align}
 			S_1(t)&= {B}_d^\prime P_d(t)+R_d K_d(t)\mathcal I_d(t)\\
 			\Delta Q^\text{Ad}_\tau(t)&=-P_\tau(t) {B}_dR_d^{-1}S_1(t)-S_1^\prime(t) R_d^{-1} {B}_d^\prime P_\tau(t) \\
 			\Delta Q^\text{Ad}_d(t)&=-P_d(t) {B}_dR_d^{-1} {B}_d^\prime P_d(t)  +\mathcal I_d^\prime(t)K^\prime_d(t) R_dK_d(t)\mathcal I_d(t) \\
 			\Delta Q^\text{Ad}_a(t)&= -P_a(t) {B}_dR_d^{-1}S_1(t)-S_1^\prime(t) R_d^{-1} {B}_d^\prime P_a(t).
 			\end{align} 
 			\label{eq:perfindex}
 		\end{subequations}
 		Here, $\Delta Q_p^\text{Ad}(t)=Q_p^\text{Ad}(t)-Q_p$, $p\in \{d,a,\tau\}$, and
 		 $P_d(t)$, $P_\tau(t)$, and $P_a(t)$ are solutions of symmetric coupled RDE \eqref{eq:Riccati}}.
 		\label{thm:thm1}
 	\end{theorem}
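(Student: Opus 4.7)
The plan is to establish the three NAFNE inequalities by Hamilton-Jacobi-Bellman (HJB) verification, one player at a time, using throughout the value-function ansatz $V_p(t,z) = \frac{1}{2} z^\prime P_p(t) z$ for $p\in\{d,\tau,a\}$, where $P_d$, $P_\tau$, $P_a$ are the symmetric solutions of the coupled RDEs \eqref{eq:Riccati} guaranteed by Theorem~\ref{thm:fbne}. The inverse-game flavor of the argument is that the stage-cost perturbations $\Delta Q_p^\text{Ad}$ and the bilinear term $-u_d^\prime S_1 z - z^\prime S_1^\prime u_d$ in \eqref{eq:Jbs} are designed precisely to absorb the discrepancy between the unconstrained defender law $-R_d^{-1}B_d^\prime P_d z$ (implicit in \eqref{eq:Riccati}) and the constrained law $K_d \mathcal{I}_d z$, so that the same $P_p$ matrices continue to satisfy the modified HJB equations.

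For the target, I would fix $u_d = K_d(t) \mathcal{I}_d(t) z$ and $u_a = -R_a^{-1}B_a^\prime P_a z$ and write the HJB equation for $J_\tau^\text{Ad}$. Since $\Delta Q_\tau^\text{Ad}$ is a pure $z$-quadratic perturbation, the inner minimization over $u_\tau$ is unaffected and returns $u_\tau^* = -R_\tau^{-1}B_\tau^\prime P_\tau z = \gamma_\tau^*$. Substituting back and comparing with \eqref{eq:fbne2} (which was derived under the unconstrained defender law), the residual mismatch equals $-P_\tau B_d R_d^{-1}(B_d^\prime P_d + R_d K_d \mathcal{I}_d)$ plus its transpose, which is exactly $\Delta Q_\tau^\text{Ad}$. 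An entirely parallel computation using \eqref{eq:fbne3} produces $\Delta Q_a^\text{Ad}$ for the attacker.

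For the defender team, the extra bilinear cost shifts the HJB first-order condition to $R_d u_d - S_1 z + B_d^\prime P_d z = 0$ under the ansatz $V_d = \frac{1}{2} z^\prime P_d z$. Requiring the minimizer to equal $K_d \mathcal{I}_d z$ uniquely forces $S_1 = B_d^\prime P_d + R_d K_d \mathcal{I}_d$, matching \eqref{eq:perfindex}. Plugging this minimizer back into HJB and comparing with \eqref{eq:fbne1}, the residual quadratic-in-$z$ gap between $||R_d^{-1}B_d^\prime P_d z||_{R_d}^2$ (implicit in \eqref{eq:fbne1}) and the realized $||K_d \mathcal{I}_d z||_{R_d}^2$, together with the bilinear cross-term, pins down $\Delta Q_d^\text{Ad}$. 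Because HJB minimizes over all $u_d \in \mathbb R^{2n}$, optimality automatically implies the restricted inequality against every $\bar{\gamma}_d^\text{Ad}\in \Gamma^\text{Ad}_d$. Terminal conditions $P_p(T)=F_p$ need no modification because $J_p^\text{Ad}$ keeps the original terminal costs.

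The main obstacle is algebraic bookkeeping: each HJB substitution produces a proliferation of cross-terms of the form $P_i S_j P_k$ and their transposes, and one must carefully separate the pieces that already cancel via the original coupled RDEs \eqref{eq:Riccati} from the residuals that must be absorbed into $\Delta Q_p^\text{Ad}$. A secondary subtlety is enforcing symmetry of each $\Delta Q_p^\text{Ad}$, which is why the expressions in \eqref{eq:perfindex} appear in the $X+X^\prime$ form rather than as raw Hamiltonian residuals; this symmetrization is what makes the quadratic-form ansatz $V_p = \frac{1}{2} z^\prime P_p z$ consistent.
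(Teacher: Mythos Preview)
Your proposal is correct and follows essentially the same approach as the paper. The paper packages the argument as a direct completion-of-squares identity: it differentiates $V_p(t,z)=\tfrac{1}{2}z^\prime P_p(t)z$, substitutes the dynamics and the Riccati equation \eqref{eq:Riccati}, rearranges to isolate $\|u_p-\gamma_p^*\|^2_{R_p}$ (or $\|u_d-K_d\mathcal I_d z\|^2_{R_d}$) plus cross-terms in the other players' deviations, and then integrates to obtain an explicit formula for $J_p^\text{Ad}$ from which the Nash inequalities are read off; your HJB-verification framing is the differential version of exactly this computation, and the algebraic bookkeeping you anticipate is identical to the paper's rearrangement step.
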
 
 	\begin{proof}
 		
 		We define the value functions $V_i(t,z(t))$ for $i\in \{d,\tau,a\}$ as
 		\begin{align}
 		V_i(t,z(t))=\frac{1}{2} z^\prime(t)P_i(t)z(t).
 		\end{align}
 		Taking the time derivative of the value function associated with the 
 		cooperative defenders we get
 		\begin{align}\label{eq:vb}
 		\dot{V}_d(t,z(t))=&\frac{1}{2}\dot{z}^\prime(t) P_d(t)z(t)+\frac{1}{2}z^\prime(t) P_d(t)\dot{z}(t)\nonumber\\
 		&+\frac{1}{2}z^\prime(t) \dot{P}_d(t)z(t).
 		\end{align}
 		Substituting for state dynamics in \eqref{eq:vb} we get
 		\begin{multline*}
 		\dot{V}_d(t,z(t))=\frac{1}{2}\left[ {B}_du_d (t) + {B}_\tau u_\tau  (t)   + B_au_a(t) \right]^\prime P_d(t) z(t)\\+\frac{1}{2}z^\prime(t) P_d(t)\left[{B}_du_d(t)+{B}_\tau u_\tau(t)+{B}_au_a(t)\right] \\
 		+\frac{1}{2}z^\prime(t)\bigl[-{Q}_d+P_d (t)S_d P_d(t)
 		+P_d(t)S_\tau P_\tau(t)\\+P_d(t) S_a P_a(t)+P_\tau (t)S_\tau P_d(t)+P_a(t)S_a P_d(t)\bigr]z(t).
 		\end{multline*}
 		Using  \eqref{eq:perfindex}, terms in the above equation can be rearranged as 
 		\begin{multline*}
 		\dot{V}_d(t,z(t))=\frac{1}{2}||u_d(t)-K_d(t)\mathcal I_d(t)z(t)||^2_{R_d}-\frac{1}{2}||u_d(t)||^2_{R_d}\\
 		+z^\prime(t) P_d(t) {B}_\tau \left[u_\tau(t)+R_\tau^{-1} {B}_\tau^\prime P_\tau(t)z(t)\right]\\  	   
 		+z^\prime(t) P_d(t) {B}_a\left[u_a(t)+R_a^{-1} {B}_a^\prime P_a(t)z(t)\right]\\
 		-\frac{1}{2} ||z(t)||^2_{Q^\text{Ad}_d(t)}
 		+\frac{1}{2}u_d^\prime(t)S_1(t)z(t)+\frac{1}{2}z^\prime(t)S_1^\prime(t)u_d(t).
 		\end{multline*}
 		Integrating the above equation from $0$ to $T$ and rearranging terms we get
 		\begin{multline*}
 		V_d(T,z(T))+\frac{1}{2}\int_0^T \Bigl(||u_d(t)||^2_{R_d} + ||z(t)||^2_{Q^\text{Ad}_d(t)} 
 		-u_d^\prime(t)S_1(t)z(t)-z^\prime(t)S_1^\prime(t)u_d(t)\Bigr)dt\\
 		=V_d(0,z(0))
 		+ \frac{1}{2}\int_0^T\Bigl(
 		||u_d(t)-K_d(t)\mathcal I_d(t)z(t)||^2_{R_d}\\
 		+2z^\prime(t) P_d(t) {B}_\tau\left[u_\tau(t)
 		+R_\tau^{-1} {B}_\tau^\prime P_\tau(t)z(t)\right]\\  	   
 		+2z^\prime(t) P_d(t) {B}_a\left[u_a(t)+R_a^{-1} {B}_a^\prime P_a(t)z(t)\right]\Bigr) dt.
 		\end{multline*}
 		As $V_d(T,z(T))=\frac{1}{2} z^\prime(T)P_d(T)z(T)$ and $P_d(T)=F_d$, thus we get:
 		\begin{multline}
 		J^\text{Ad}_d(u_d(.),u_\tau(.),u_a(.))=V_d(0,z(0))  
 		+\frac{1}{2}\int_0^T\Bigl(
 		2z^\prime(t) P_d(t){B}_\tau\left[u_\tau(t)-\gamma_\tau^*(t,z(t))\right] \\  	   
 		+||u_d(t)-\gamma_d^\text{Ad}(t,z(t);K_d(t))||^2_{R_d}\\+2z^\prime(t) P_d(t) {B}_a\left[u_a(t)
 		-\gamma_a^*(t,z(t))\right]\Bigr) dt,  
 		\label{eq:Jbseq}
 		\end{multline}
 		where $J^\text{Ad}_d(u_d(.),u_\tau(.),u_a(.))$ is defined in \eqref{eq:Jbs}. Using the same approach as above we can show the following relations
 		\begin{multline}
 		J^\text{Ad}_\tau(u_d(.),u_\tau(.),u_a(.))=V_t(0,z(0)) +\frac{1}{2}\int_0^T\Bigl(
 		||u_\tau(t)-\gamma_\tau^*(t,z(t))||^2_{R_\tau}\\
 		+2z^\prime(t) P_\tau(t) {B}_d\left[u_d(t)-\gamma_d^\text{Ad}(t,z(t);K_d(t))\right]\\+
 		2z^\prime(t) P_\tau(t) {B}_a\left[u_a(t)-\gamma_a^*(t,z(t))\right]
 		\Bigr) dt,  
 		\label{eq:Jaseq}
 		\end{multline}
 		\begin{multline}
 		J^\text{Ad}_a(u_d(.),u_\tau(.),u_a(.))=V_a(0,z(0)) +\frac{1}{2}\int_0^T\Bigl(
 		2z^\prime(t) P_a(t) {B}_\tau\left[u_\tau(t)-\gamma_\tau^*(t,z(t))\right]\\
 		+2z^\prime(t) P_a(t) {B}_d\left[u_d(t)-\gamma_d^\text{Ad}(t,z(t);K_d(t))\right]+\\
 		||u_a(t)-\gamma_a^*(t,z(t))||^2_{R_a}\Bigr) dt.  
 		\label{eq:Jcseq}
 		\end{multline}
 		Clearly,  $(\gamma^\text{Ad}_d,\gamma_\tau^*,\gamma_a^*)$ is a NAFNE of the game with performance indices \eqref{eq:Jseq}.
 	\end{proof}
 	
 	\begin{remark} 	\label{rem:Bayesian} {Recalling feature \textbf{F2} we note that the interaction \textbf{I1} is a game of asymmetric information.} In the language of Bayesian games \cite{Myerson:97},  this implies that the attacker's and target's beliefs, over defenders' type set, would assign probability equal to one to the type where defenders' have unlimited observations. As a result, the attacker and the target use their standard FNE strategies associated with the performance indices $(J_d,J_\tau,J_a)$, whereas the defenders use their NAFNE strategies associated with the parametric performance indices $(J^\text{Ad}_d,J^\text{Ad}_\tau,J^\text{Ad}_a;K_d(.))$. 	
 	\end{remark}
  
 Theorem \ref{thm:thm1} characterizes  parametric performance indices $(J^\text{Ad}_d,J^\text{Ad}_\tau,J^\text{Ad}_a;K_d(.))$   for which the strategy profile $(\gamma^\text{Ad}_d,\gamma_\tau^*,\gamma_a^*)$ is a NAFNE.  Since the choice of gain matrices $K_d(t),~t\in[0,T]$ is arbitrary, we obtain a very large class of performance indices. In section \ref{sec:optimization}, we develop a consistency criterion for selecting the gain matrices.  
 
 When the attacker is non-suicidal, the interactions inherently involve two simultaneous PE games. First one involving the attacker and the defenders' team, and the second one involving the attacker and the target. Though we obtain the implementable strategies, through  Theorem \ref{thm:thm1}, it is difficult to geometrically characterize the trajectories of the players. However, 	when the attacker is suicidal,  then the defenders are only reacting to a single PE interaction involving the  attacker and the target. In the next theorem, we recover the classical result \cite{isaacs65a} that the trajectories of the attacker and the target evolve along a straight line.  
 	\begin{theorem}\label{thm:straight_line} Consider the $(n+2)$--player finite horizon {NZLQDG} described by \eqref{eq:state_eq}-\eqref{eq:fullobj} with $\lambda=0$. Then, the suicidal attacker and the target move on the straight line joining their locations at time $t=0$. Further, the visibility constraints of the defenders have no effect
 		on the control and state trajectories of the attacker and the target.
 	\end{theorem}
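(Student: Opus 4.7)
The plan is to show that when $\lambda=0$ the coupled Riccati matrices $P_\tau(t)$ and $P_a(t)$ from Theorem \ref{thm:fbne} retain the rank-one block structure of their terminal and running costs, which forces the attacker's and target's FNE controls to depend only on the relative displacement $z_\tau(t) = X_\tau(t) - X_a(t)$. Define the projection $E_{n+1}:=\diag{\zdm{2n\times 2n},\idm{2}}$. Then $F_\tau=-f_{\tau a}E_{n+1}$, $Q_\tau=-q_{\tau a}E_{n+1}$, and, with $\lambda=0$, $F_a=f_{a\tau}E_{n+1}$, $Q_a=q_{a\tau}E_{n+1}$. The first preparatory step is to compute the $S$-matrices implicit in \eqref{eq:Riccati}: $S_\tau=B_\tau R_\tau^{-1}B_\tau^\prime=r_\tau^{-1}E_{n+1}$, $S_a=B_aR_a^{-1}B_a^\prime=r_a^{-1}(\odm{n+1}\odm{n+1}^\prime)\otimes \idm{2}$, and $S_d=B_dR_d^{-1}B_d^\prime=\diag{R_d^{-1},\zdm{2\times 2}}$. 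The crucial algebraic identity is $E_{n+1}S_d = S_d E_{n+1}=\zdm{2(n+1)\times 2(n+1)}$, reflecting the fact that defender controls enter \eqref{eq:state_eq} through block positions orthogonal to the target block.

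Next, I would substitute the ansatz $P_\tau(t)=p_\tau(t)E_{n+1}$ and $P_a(t)=p_a(t)E_{n+1}$ with scalar $p_\tau,p_a$ into \eqref{eq:fbne2} and \eqref{eq:fbne3}. Every cross term involving $P_d$, namely $P_\tau S_d P_d$, $P_d S_d P_\tau$, $P_a S_d P_d$, $P_d S_d P_a$, vanishes by the identity above, regardless of the (possibly complicated) structure of $P_d$. The surviving terms all reduce via $E_{n+1}^2=E_{n+1}$ and $E_{n+1}S_aE_{n+1}=r_a^{-1}E_{n+1}$ to scalar multiples of $E_{n+1}$, so \eqref{eq:fbne2}--\eqref{eq:fbne3} collapse to the decoupled (from the defender states) scalar system
\begin{align*}
\dot p_\tau(t)&=r_\tau^{-1}p_\tau^2(t)+2r_a^{-1}p_\tau(t)p_a(t)+q_{\tau a},\quad p_\tau(T)=-f_{\tau a},\\
\dot p_a(t)&=r_a^{-1}p_a^2(t)+2r_\tau^{-1}p_\tau(t)p_a(t)-q_{a\tau},\quad p_a(T)=f_{a\tau}.
\end{align*}
By the uniqueness assertion of Theorem \ref{thm:fbne}, any such solution gives the unique FNE Riccati pair. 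Substituting back into \eqref{eq:1d} yields $u_\tau^*(t)=-r_\tau^{-1}p_\tau(t)z_\tau(t)$ and $u_a^*(t)=r_a^{-1}p_a(t)z_\tau(t)$ after using $B_\tau^\prime E_{n+1}=\bigl[\zdm{2\times 2n},\idm{2}\bigr]$ and $B_a^\prime E_{n+1}=-\odm{n+1}^\prime\otimes\idm{2}\cdot E_{n+1}=\bigl[\zdm{2\times 2n},-\idm{2}\bigr]$.

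Finally, I would assemble the geometric conclusion. Since $\dot z_\tau=u_\tau^*-u_a^*=-\bigl(r_\tau^{-1}p_\tau(t)+r_a^{-1}p_a(t)\bigr)z_\tau(t)$, the relative vector $z_\tau(t)$ satisfies a scalar linear ODE and stays a positive scalar multiple of $z_\tau(0)=X_\tau(0)-X_a(0)$. Therefore $\dot X_\tau(t)=u_\tau^*(t)$ and $\dot X_a(t)=u_a^*(t)$ are at every instant parallel to the initial line-of-sight vector $X_\tau(0)-X_a(0)$, so both trajectories remain on the straight line through $X_\tau(0)$ and $X_a(0)$. Independence from the defenders is immediate: the scalar ODEs for $p_\tau,p_a,z_\tau$ and the strategies $u_\tau^*,u_a^*$ contain no defender state, no defender control, and no visibility parameter $\zeta_{d_i}$; combined with Remark \ref{rem:commonknowledge}, which guarantees that the attacker and target commit to their standard FNE strategies, this establishes the second claim. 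The main obstacle is the verification that the ansatz is preserved by \eqref{eq:Riccati}; once the annihilation $E_{n+1}S_d=0$ is noticed, it is routine, but without that observation the presence of $P_d$ in the cross terms seems to spoil the block structure.
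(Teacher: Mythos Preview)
Your proposal is correct and in fact cleaner than the paper's own proof. The paper first reduces to a single defender, then explicitly partitions $P_\tau$, $P_a$ into $2\times 2$ blocks and uses the matrix variation-of-constants formula twice: once to show the off-diagonal blocks $P_\tau^{11},P_\tau^{12},P_a^{11},P_a^{12}$ vanish, and again to show the surviving blocks $P_\tau^{22},P_a^{22}$ are scalar multiples of $\idm{2}$; it then differentiates the slope $s(t)=z_\tau^y/z_\tau^x$ to obtain the straight-line conclusion. You reach the same structural fact $P_\tau=p_\tau E_{n+1}$, $P_a=p_aE_{n+1}$ by an ansatz whose preservation under \eqref{eq:fbne2}--\eqref{eq:fbne3} is forced by the single identity $E_{n+1}S_d=S_dE_{n+1}=0$, and then conclude collinearity directly from $z_\tau(t)\parallel z_\tau(0)$. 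This buys you generality (no reduction to $n=1$) and transparency (the orthogonality of the defender input channel to the target block is the whole story), at the cost of having to guess the ansatz. One small point: the ``uniqueness assertion of Theorem \ref{thm:fbne}'' you invoke is uniqueness of the \emph{equilibrium}, not of the Riccati solution; what you actually need is ordinary ODE uniqueness for \eqref{eq:Riccati} (the right-hand side is polynomial, hence locally Lipschitz), which guarantees that your ansatz coincides with the assumed Riccati solution on $[0,T]$ and, incidentally, that the scalar pair $(p_\tau,p_a)$ cannot blow up before $t=0$.
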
   
    \begin{proof}
	    	As the attacker is suicidal, it is sufficient to consider interactions with one defender, labeled by $d_1$.
	    	The state equation \eqref{eq:state_eq} using the network adapted feedback Nash equilibrium $(\gamma_{d_1}^\text{Ad},\gamma_\tau^*,\gamma_a^*)$ is given by 
	    	\begin{align}\label{eq:state_eq_rewrite}
	    	\dot{z}(t)=\left(B_{d_1}K_{d_1}(t)\mathcal I_{d_1}(t)-S_\tau P_\tau(t)-S_a P_a(t)\right)z(t),
	    	\end{align} 
	    	Following symmetry property, we partition the matrix $P_i(t)$ as $P_i(t)=\begin{bmatrix}
	    	P_i^{11}(t)&P_i^{12}(t)\\
	    	P_i^{12}(t)&P_i^{22}(t)
	    	\end{bmatrix}$ for $i \in \{{d_1},\tau,a\}$. Using this we obtain
	    	\begin{align}\label{eq:z_t_first}
	    	\dot{z}_\tau(t)=& - \Bigl[r_{\tau}^{-1}P_\tau^{12}(t)+r_a^{-1}(P_a^{11}(t)+P_a^{12}(t))\Bigr]z_{d_1}(t)- \Bigl[r_{\tau}^{-1}P_\tau^{22}(t)+r_a^{-1}(P_a^{12}(t)+P_a^{22}(t))\Bigr]z_\tau(t).
	    	\end{align}
	    	The   RDE \eqref{eq:Riccati} can be partitioned and collected as
	    	\begin{multline}
	    	\begin{bmatrix}
	    	\dot{P}_\tau^{11}(t) & \dot{P}_\tau^{12}(t) & \dot{P}_a^{11}(t) & \dot{P}_a^{12}(t) 
	    	\end{bmatrix} 
	    	= \begin{bmatrix}
	    	{P}_\tau^{11}(t) & {P}_\tau^{12}(t) & {P}_a^{11}(t) & {P}_a^{12}(t) 
	    	\end{bmatrix}
	    	\Sigma(t)\\
	    	+ r_{d_1}^{-1}P_{d_1}^{11}(t)\begin{bmatrix}
	    	{P}_\tau^{11}(t) & {P}_\tau^{12}(t) & {P}_a^{11}(t) & {P}_a^{12}(t) 
	    	\end{bmatrix},
	    	\end{multline}
	    	where the matrix $\Sigma(t)$ is partitioned as $4\times 4$ block matrix with elements 
	    	$\Sigma^{11}(t)=\Sigma^{33}(t)=
	    	r_{d_1}^{-1}P_{d_1}^{11}(t)+r_a^{-1}(P_a^{11}(t)+ P_a^{12}(t))$,
	    	$\Sigma^{12}(t)=\Sigma^{34}(t)=r_{d_1}^{-1}P_{d_1}^{12}(t)+r_a^{-1}(P_a^{12}(t)+ P_a^{22}(t))$,
	    	$\Sigma^{21}(t)=\Sigma^{43}(t)=r_{\tau}^{-1}P_\tau^{12}(t)+r_a^{-1}(P_a^{11}(t)+P_a^{12}(t))$,
	    	$\Sigma^{22}(t)=\Sigma^{44}(t)=r_{\tau}^{-1}P_\tau^{22}(t)+r_a^{-1}(P_a^{12}(t)+P_a^{22}(t))$,
	    	$\Sigma^{31}(t)=\Sigma^{41}(t)=r_a^{-1}(P_\tau^{11}+P_\tau^{12}(t))$,
	    	$\Sigma^{13}(t)=\Sigma^{14}(t)=0$, and
	    	$\Sigma^{32}(t)=\Sigma^{42}(t)=r_a^{-1}(P_\tau^{12}+P_\tau^{22}(t))$.
	    	Since ${P}_\tau^{11}(T)={P}_\tau^{12}(T)={P}_a^{11}(T)={P}_a^{12}(T)=0$, then 
	    	following  the matrix variation of constant formula 
	    	\cite[Theorem 1, pg. 59]{Brockett:15},  it follows immediately that ${P}_\tau^{11}(t)= {P}_\tau^{12}(t)={P}_a^{11}(t)={P}_a^{12}(t)=0$ for all $t\in[0,T]$. Using this in \eqref{eq:z_t_first} we get
	    	\begin{align}\label{eq:z_t_second}
	    	\dot{z}_\tau(t)=& -\bigl[r_{\tau}^{-1}P_\tau^{22}(t)+r_a^{-1}P_a^{22}(t)\bigr]z_\tau(t).
	    	\end{align}
	    	Again, using symmetry property we partition the matrices $P_\tau^{22}(t)$ and $P_a^{22}(t)$ as $P_\tau^{22}(t)=\begin{bmatrix}
	    	k_1(t) & k_2(t)\\
	    	k_2(t) & k_3(t)
	    	\end{bmatrix}$ and $P_a^{22}(t)=\begin{bmatrix}
	    	k_4(t) & k_5(t)\\
	    	k_5(t) & k_6(t)
	    	\end{bmatrix}$, we obtain
	    	\begin{align*}
	    	\dot{k}_1(t)&=q_\tau 
	    	+r_{\tau}^{-1}(k_1^2(t)+k_2^2(t))+2r_a^{-1}(k_1(t)k_4(t)+k_2(t)k_5(t))\\
	    	\dot{k}_2(t)&=r_{\tau}^{-1}k_2(k_1+k_3)+r_a^{-1}(k_2(k_4+k_6)+k_5(k_1+k_3))\\
	    	\dot{k}_3(t)&=q_\tau 
	    	+r_{\tau}^{-1}(k_2^2(t)+k_3^2(t))+2r_a^{-1}(k_2(t)k_5(t)+k_3(t)k_6(t))\\		
	    	\dot{k}_4(t)&=-q_{a\tau}
	    	+r_{a}^{-1}(k_4^2(t)+k_5^2(t))+2r_\tau^{-1}(k_1(t)k_4(t)+k_2(t)k_5(t))\\
	    	\dot{k}_5(t)&=r_{a}^{-1}k_5(t)(k_4(t)+k_6(t))+r_\tau^{-1}(k_2(t)(k_4(t)+k_6(t))+k_5(t)(k_1(t)+k_3(t)))\\
	    	\dot{k}_6(t)&=-q_{a\tau}
	    	+r_{a}^{-1}(k_5^2(t)+k_6^2(t)) +2r_\tau^{-1}(k_2(t)k_5(t)+k_3(t)k_6(t)),
	    	\end{align*}
	    	where $k_1(T)=k_3(T)=-f_{\tau a}$, $k_2(T)=k_5(T)=0$, $k_4(T)=k_6(T)=f_{a\tau}$. Since $k_2(T)=k_5(T)=0$, $k_1(T)-k_3(T)=0$, and $k_4(T)-k_6(T)=0$, using again the matrix variation of 
	    	constants formula we can show $k_2(t)=k_5(t)=0$, $k_1(t)=k_3(t)$, and $k_4(t)=k_6(t)$ for all $t\in[0, T]$. 
	    	
	    	So, $P_\tau^{22}=k_1(t)\idm{2}$ and $P_a^{22}=k_4(t)\idm{2}$. Using this in \eqref{eq:z_t_second} we get
	    	\begin{align}\label{eq:z_t_third}
	    	\dot{z}_\tau(t)= - \left(r_\tau^{-1}k_1(t)+r_a^{-1}k_4(t)\right) z_\tau(t).
	    	\end{align} 
	    	Representing $z_\tau=[z_\tau^x,~z_\tau^y]^\prime$ the slope of the line joining the attacker $a$ and the target $\tau$  at time $t$  is given by
	    	$s(t)=\frac{z_\tau^y(t)}{z_\tau^x(t)}$ for $z_\tau^x(t)\neq 0$.
	    	The time derivative of the slope $s(t)$ results in
	    	\begin{align*}
	    	\dot{s}(t)&=\frac{\dot{z}_\tau^y(t) z_\tau^x(t)- z_\tau^y(t)\dot{z}_\tau^x(t)}{(z_\tau^x(t))^2}\\&= - \left(r_t^{-1}k_1(t)+r_a^{-1}k_4(t)\right)\frac{ z_\tau^x(t) z_\tau^y(t)-  z_\tau^x(t) z_\tau^y(t)}{(z_\tau^x(t))^2}\\&   =0.
	    	\end{align*} 
	    	Finally, when $z_\tau^x(t)=0$ then $z_\tau^x(s)=0$ for all $s \in [t,T]$ this implies that attacker and target move along the $y$-axis 
	    	during the time period $[t,T]$. 
	    	
	    	Since the attacker and the target use their standard FNE strategies, and from the solution of $P_a(t)$ and $P_\tau(t)$ obtained from the above, we get
	    	\begin{subequations} 
	    		\begin{align}
	    		u_a^*(t)&=-R_a^{-1}B_a^\prime P_a(t)z(t)=r_a^{-1}k_4(t)z_\tau(t),\\
	    		u_\tau^*(t)&=-R_\tau^{-1}B_\tau^\prime P_\tau(t)z(t)=-r_\tau^{-1}k_1(t)z_\tau(t).
	    		\end{align}
	    		\label{eq:controlat}
	    	\end{subequations} 
	    	Clearly, from \eqref{eq:z_t_third} and \eqref{eq:controlat}, the visibility constraints of the defenders have no effect on the control and state trajectories of the attacker and the target.
	    \end{proof}
  	\begin{remark}  In Theorem \ref{thm:straight_line}, though the defenders' visibility constraints have no effect on strategies of the attacker and target, they can  influence the eventual outcome and the termination time of the game. 
  \end{remark}
	    \section{ {Analysis of interaction \textbf{I2}}}\label{sec:feedback_zsg}
	     {In this section, we analyze interactions \textbf{I2} where the visibly constrained defenders and target cooperate as a team against the non-suicidal attacker. Following a similar approach  developed in section \ref{sec:feedback_nzsg}, we model this interaction as a two-player zero-sum linear quadratic differential game (ZLQDG) \cite{Basar:99} with attacker as the first player, and the team of defenders and the target as the second player}. The dynamics \eqref{eq:state_eq} of the players can be rewritten as 
	    \begin{align}\label{eq:state_eq1}
	    \dot{z}(t)=B_{d\tau}u_{d\tau}(t)
	    +B_au_a(t),
	    \end{align}
	    where $u_{d\tau}(t)=\col{u_d(t),u_\tau(t)}$,~$t\in[0,T]$, $B_{d\tau}=[B_d~ B_\tau]$. 	
	    The objective function minimized by the attacker  and maximized by the team of defenders and  the target is given by
	    \begin{align}
	    J(u_{d\tau}(.),u_a(.))&=  \frac{1}{2}||z_{\tau}(T)||^2_{F_{a\tau}} -\frac{1}{2}\sum_{j=1}^{n}||z_{d_j}(T)||^2_{F_{ad_j}}\nonumber\\
	    &
	    +\frac{1}{2}\int_{0}^{T} \Big( ||z_{\tau}(t)||^2_{Q_{a\tau}}   -\sum_{j=1}^{n}||z_{d_j}(t)||^2_{Q_{ad_j}}
	    + ||u_a(t)||^2_{R_a}
	    \nonumber\\
	    &\quad \quad -  ||u_\tau(t)||^2_{R_\tau} -\sum_{j=1}^{n} ||u_{d_j}(t)||^2_{R_{d_j}} 
	    \Big)   dt, \nonumber\\
	    =&\frac{1}{2}||z(T)||_{F}^2+\frac{1}{2}\int_{0}^{T} \Big( ||u_a(t)||^2_{R_a} 
	      - ||u_{d\tau} (t)||^2_{ R_{d\tau}} + ||z(t)||^2_{Q}\Big)   dt,  \label{eq:obj:zero_sum}
	    \end{align} 
	    	where  $R_{d\tau}=\diag{R_{d_1}, ,\cdots ,R_{d_n},R_\tau}$,
	    $F=\text{diag}\{-F_{ad_1 },\cdots,-F_{ad_n}, F_{a\tau}\}$ and \\$Q=\diag{-Q_{ad_1 },\cdots,-Q_{ad_n},Q_{a\tau}}$. 	    
	    Let $\gamma_{d\tau}=\col{\gamma_d,\gamma_\tau} \in\Gamma_d \times \Gamma_\tau$ represents the feedback strategy set of the defenders and target team. The strategy profile $(\gamma^*_{d\tau},  \gamma^*_{a})$ is a FNE for the  {ZLQDG} if the following set of inequalities hold true
	    \begin{subequations} 
	    	\begin{align}
	    	&J(\gamma^*_{d\tau},\gamma^*_a) \geq J(\gamma_{d\tau}, \gamma^*_a), ~\forall \gamma_{d\tau}\in  \Gamma_{d\tau},\\
	    	&J(\gamma^*_{d\tau},\gamma^*_a) \leq J(\gamma^*_{d\tau}, \gamma_a), ~\forall \gamma_a\in  \Gamma_a.
	    	\end{align}
	    \end{subequations} 
	    The next theorem from \cite{engwerda2005lq} provides conditions for the existence of FNE
	    associated with  {ZLQDG}.
	    \begin{theorem}{\cite[Theorem 8.4]{engwerda2005lq}}\label{thm:zero:sum:game}
	    	Consider the  {ZLQDG} described by \eqref{eq:state_eq1}-\eqref{eq:obj:zero_sum}. This 
	    	game has a FNE, denoted by $(\gamma^*_{d\tau},\gamma^*_a)$, for every initial state, if and only if the following RDE has a symmetric solution $P(t)$ on $[0,T]$
	    	\begin{align}\label{eq:RDE}
	    	\dot{P}(t)=-Q+P(t)\left(S_a-S_{d\tau}\right)P(t),~P(T)=F.
	    	\end{align}
	    	{where $S_i=B_iR_i^{-1}B_i^\prime$,~$i=\{d\tau,a\}$.}		Moreover, if equation \eqref{eq:RDE} has a solution, the game has a unique equilibrium. The
	    	equilibrium actions are given by
	    	\begin{subequations}
	    		\begin{align}
	    		u^*_a(t)&=\gamma_a^*(t,z(t))=-R_a^{-1}B_a^\prime P(t)z(t), \label{eq:zero_sum_control1}\\
	    		u^*_{d\tau}(t)&=\gamma^*_{d\tau}(t,z(t))=R_{d\tau}^{-1}B_{d\tau}^\prime P(t)z(t).\label{eq:zero_sum_control2}
	    		\end{align}
	    	\end{subequations}
	    \end{theorem}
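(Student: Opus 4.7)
The plan is to establish the saddle-point inequalities by a completion-of-squares argument closely mirroring the proof of Theorem \ref{thm:thm1}, but adapted to the simpler zero-sum structure. Because the game is zero-sum, a single quadratic value function $V(t,z)=\tfrac{1}{2}z^\prime P(t)z$ with $P(t)$ symmetric and $P(T)=F$ suffices in place of the three value functions used earlier. The goal is to identify the conditions on $P(t)$ under which the candidate controls \eqref{eq:zero_sum_control1}-\eqref{eq:zero_sum_control2} form a saddle point, and to show these conditions coincide with the RDE \eqref{eq:RDE}.

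For sufficiency, I would differentiate $V(t,z(t))$ along the dynamics \eqref{eq:state_eq1}, substitute $\dot{P}(t)$ from \eqref{eq:RDE}, and complete the squares jointly in $u_a$ and $u_{d\tau}$. A direct calculation should yield the identity
\begin{multline*}
\dot{V}(t,z(t))+\tfrac{1}{2}||z(t)||^2_Q+\tfrac{1}{2}||u_a(t)||^2_{R_a}-\tfrac{1}{2}||u_{d\tau}(t)||^2_{R_{d\tau}}\\
=\tfrac{1}{2}||u_a(t)-\gamma^*_a(t,z(t))||^2_{R_a}-\tfrac{1}{2}||u_{d\tau}(t)-\gamma^*_{d\tau}(t,z(t))||^2_{R_{d\tau}}.
\end{multline*}
Integrating from $0$ to $T$ and using $P(T)=F$ gives
\begin{multline*}
J(u_{d\tau}(.),u_a(.))=\tfrac{1}{2}z^\prime(0)P(0)z(0)\\
+\tfrac{1}{2}\int_0^T\left(||u_a(t)-\gamma^*_a(t,z(t))||^2_{R_a}-||u_{d\tau}(t)-\gamma^*_{d\tau}(t,z(t))||^2_{R_{d\tau}}\right)dt.
\end{multline*}
Both saddle-point inequalities then follow by inspection: fixing $u_a=\gamma_a^*$ makes the first squared term vanish along the resulting trajectory, leaving $J$ maximized over $u_{d\tau}$ at $u_{d\tau}=\gamma_{d\tau}^*$; symmetrically, fixing $u_{d\tau}=\gamma_{d\tau}^*$ kills the second squared term, leaving $J$ minimized over $u_a$ at $u_a=\gamma_a^*$. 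This simultaneously identifies the equilibrium value as $\tfrac{1}{2}z^\prime(0)P(0)z(0)$.

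For necessity, I would pass to the Hamilton--Jacobi--Isaacs PDE for the upper and lower values of the game. The pre-Hamiltonian of \eqref{eq:state_eq1}-\eqref{eq:obj:zero_sum} is separable and strictly convex-concave in $(u_a,u_{d\tau})$, so the Isaacs condition holds and the two PDEs coincide. Since the pointwise saddle-point maps are affine in the costate, the value function must be quadratic in $z$; writing it as $\tfrac{1}{2}z^\prime P(t)z$ and matching the coefficients of $z^\prime(\cdot)z$ in the HJI PDE reproduces \eqref{eq:RDE} together with the terminal condition $P(T)=F$. Uniqueness of the FNE then follows from uniqueness of the symmetric solution of \eqref{eq:RDE} on $[0,T]$.

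I expect the main obstacle to lie in the necessity direction rather than in sufficiency. The sufficiency half is a mechanical completion-of-squares verification once the ansatz $V=\tfrac{1}{2}z^\prime P z$ is in place and reduces essentially to bookkeeping of the identity above. The necessity half, by contrast, requires justifying \emph{a priori} that the game value is exactly quadratic in $z$ and that the Isaacs saddle is attained at every $(t,z)$ --- both standard for LQ zero-sum games under the implicit assumption $R_a,R_{d\tau}\succ 0$, but not automatic and requiring the strict convex-concavity argument indicated above.
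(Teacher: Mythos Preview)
The paper does not prove this theorem at all: it is quoted verbatim from \cite[Theorem 8.4]{engwerda2005lq} and stated without proof, so there is no in-paper argument to compare against. Your completion-of-squares derivation for sufficiency is the standard route (and is in fact the template the paper itself uses later in the proof of Theorem~\ref{thm:thm2}, where the same identity is obtained with the additional $S_2,S_3$ cross terms), and your HJI sketch for necessity is the usual justification; both are consistent with how Engwerda establishes the result in the cited reference.
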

	    The conditions under which the RDE \eqref{eq:RDE} admits a solution follow from
	    \cite[Corollary 5.13]{engwerda2005lq}. The equilibrium team strategy \eqref{eq:zero_sum_control2} can be decomposed as follows
	     {\begin{subequations} 
	    	\begin{align}
	    	u_{d_i}^*(t)&=r_{d_i}^{-1}\Big[\sum_{j=1}^{n+1}P^{ij}(t) \Big] z_{d_i}(t)  +r_{d_i}^{-1}P^{i(n+1)}(t)\big(z_{\tau}(t)-z_{d_i}(t)\big)\nonumber \\
	    	&\quad +r_{d_i}^{-1}
	    	\sum_{j=1}^{n}P^{ij}(t) \big(z_{d_j}(t)-z_{d_i}(t)\big),\label{eq:control_def_zsg1}\\
	    	u_{\tau}^*(t)&=r_\tau^{-1}\sum_{j=1}^{n} P^{(n+1)j}(t)\big(z_{d_j}(t)-z_\tau(t)\big)  + r_\tau^{-1}\sum_{j=1}^{n+1} P^{(n+1)j}(t)z_{\tau}(t),
	    	\label{eq:control_target_zsg}
	    	\end{align}
	    	\label{eq:zerosum_team}
	    \end{subequations}} 
	    where matrix $P^{ij}(t)$ is the $i$th row and $j$th column element (a block matrix)
	    obtained by partitioning the matrix $P(t)$ into block matrices of dimension $2\times 2$.
	    
	     {The attacker can implement their FNE strategies \eqref{eq:zero_sum_control1} due to unlimited observation range, whereas the defenders and target cannot implement the FNE strategy \eqref{eq:control_def_zsg1} and  \eqref{eq:control_target_zsg} due to visibility constraints. %
	    	For the  visibility constrained target, to indicate the visibility of the attacker or defender $d_i\in \mathcal D$  at time $t\in[0,T]$, we define the binary function $\phi^p_{\tau}:[0,T]\rightarrow \mathbb R$, $p\in\mathcal D \cup \{a\}$  as}
	    	 {\begin{align}
	    	\phi^p_{\tau}(t)=\begin{cases}
	    	1 & \tau \rightarrow p \in \mathcal E(t)\\
	    	0 & \tau \rightarrow p \notin \mathcal E(t). 
	    	\end{cases} 
	    	\label{eq:tauh1}
	    	\end{align} 
	       	The set of \textit{network adapted} linear feedback strategies of the target $\tau$ is given by
	    	\begin{multline}
	    	{\Gamma}^{\text{Ad}}_{\tau}:=\Bigl\{{\gamma}^{\text{Ad}}_{\tau}:[0,T]\times \mathbb{R}^{2(n+1)}\times \mathbb{R}^{2\times 2(n+1)}\rightarrow \mathbb{R}^{2}~\Big|\\~{u}_{\tau}(t):=\gamma_{\tau}^{\text{Ad}}(t,z(t);K_{\tau}(t))=K_{\tau}(t) \mathcal I_{\tau}(t)z(t),\\K_{\tau}(t)\in \mathbb{R}^{2\times 2(n+1)},~t\in[0,T]\Bigr\}.
	    	\label{eq:nwadapttau}			
	    	\end{multline}
	    	Here, the matrix function $\mathcal I_{\tau}:[0,T]\rightarrow \mathbb R^{2(n+1)\times 2(n+1)}$  captures state information of players in $\mathcal P\backslash \{\tau\}$  who are visible to target at time $t$, which is defined by
	    	\begin{align*}
	    	\mathcal I_{\tau}(t):= \Phi_\tau(t) \left(\idm{n+1}-{\mathbf{e}^{n+1}_{n+1}}^\prime\otimes\left[\odm{n+1}-\mathbf{e}^{n+1}_{n+1}\right]\right)\otimes \idm{2},
	    	\label{eq:infomatrixtau}
	    	\end{align*}
	    	where	$\Phi_\tau(t)=\diag{\phi_\tau^{d_1}(t), \cdots,\phi_\tau^{d_n}(t),\phi_\tau^{a}(t)}$
	    	and the gain matrix $K_{\tau}(t)$ can be partitioned into $n+1$ block matrices of size $2\times 2$ as follows
	    	\begin{align} K_{\tau}(t)=\begin{bmatrix}K_{\tau}^{d_1}(t) &  \cdots& K_{\tau}^{d_n}(t)& K_{\tau}^a(t) \end{bmatrix}.
	    	\end{align}} 
	    	 {The network adapted feedback team strategy set is denoted by $\Gamma^\text{Ad}_{d\tau}:=\Gamma^\text{Ad}_d\times \Gamma^\text{Ad}_\tau$. Following  Remark \ref{rem:commonknowledge}, we require that an arbitrary network adapted feedback strategy of the defenders $\gamma_d^\text{Ad}(t,z(t))=K_d(t) \mathcal I_d(t)z(t)$, when coupled with the target's network adapted feedback strategy strategy \eqref{eq:nwadapttau}  as
	    \begin{align}
	    \gamma^\text{Ad}_{d\tau}(t,z(t);(K_{d}(t),K_{\tau}(t)))= \begin{bmatrix}\gamma^\text{Ad}_d(t,z(t);K_d(t))\\
	    \gamma_\tau^\text{Ad}(t,z(t);K_\tau(t))  \end{bmatrix},\label{eq:combined_control_zsg}
	    \end{align}
	    forms a FNE along with the attacker's strategy \eqref{eq:zero_sum_control1}.
	    However, the strategy profile $(\gamma^\text{Ad}_{d\tau},\gamma_a^*)$ cannot be a FNE  for the performance indices \eqref{eq:obj:zero_sum} due to defenders' visibility constraints. Similar to Theorem \ref{thm:thm1}, in the next theorem we construct a class of performance indices, parameterized by the gain matrices $K_d(t),~t\in[0,T]$ with respect to which the strategy profile $(\gamma^\text{Ad}_{d\tau},\gamma_a^*)$  is a NAFNE.}
	  {\begin{theorem}\label{thm:thm2}  
 	Consider 2-player ZLQDG described by \eqref{eq:state_eq1}-\eqref{eq:obj:zero_sum}. For an arbitrary gain matrices $K_d(t)=\diag{K_{d_1}(t), \cdots,K_{d_n}(t)}$,   and $K_\tau(t)$, $t\in[0,T]$ the strategy profile
	    	$(\gamma^\text{Ad}_{d\tau},\gamma_a^*)$   with   $\gamma_a^*(t,z(t))=-R_a^{-1}B_a^\prime P(t)z(t)$ and $\gamma^\text{Ad}_{d\tau}(t,z(t);(K_d(t),K_\tau(t)))$   
	    	forms a NAFNE characterized by the inequalities
	    	\begin{subequations} 
	    		\begin{align} &J^\text{Ad}({\gamma}^\text{Ad}_{d\tau}, \gamma^*_a)\geq J^\text{Ad}(\bar{\gamma}^\text{Ad}_{d\tau}, \gamma^*_a), ~\forall  \bar{\gamma}^\text{Ad}_{d\tau}\in {\Gamma}^\text{Ad}_{d\tau}\\
	    		&J^\text{Ad}({\gamma}^\text{Ad}_{d\tau}, \gamma^*_a)\leq J^\text{Ad}({\gamma}^\text{Ad}_{d\tau}, \gamma_a), ~\forall \gamma_a\in \Gamma_a, 
	    		\end{align} 
	    		\label{eq:zerosumgame} 
	    	\end{subequations} 
	    	with the parametric performance index $(J^\text{Ad};(K_d(.),K_\tau(.)))$ given by 
	    	\begin{multline}\label{eq:obj:zero_sum_stru}
	    	J^\text{Ad}(u_{d\tau}(.),u_a(.))=\frac{1}{2}||z(T)||_{F}^2 
	    	+\frac{1}{2} \int_{0}^{T}\Big(||z(t)||^2_{Q^\text{Ad}(t)}\\
	    	+u_d^\prime(t) S_2(t) z(t) + z^\prime(t) S_2^\prime(t) u_d(t) 
	    	+u_\tau^\prime(t) S_3(t) z(t)\\ + z^\prime(t) S_3^\prime(t) u_\tau(t)+||u_a(t)||^2_{R_a} -||u_{d\tau}(t)||^2_{R_{d\tau}}
	    	\Big)dt,
	    	\end{multline}
	    	where
	    	\begin{subequations} 
	    		\begin{align}
	    		\Delta Q^\text{Ad}(t)&=P(t)B_{d}R_{d}^{-1}B_{d}^\prime P(t)+ P(t)B_{\tau}R_{\tau}^{-1}B_{\tau}^\prime P(t)\nonumber \\
	    		&	    		\hspace{-0.5in} -\mathcal I^\prime_d(t)K_d^\prime(t)R_d K_d(t)\mathcal I_d(t)\label{Q^s(t)}
 -\mathcal I^\prime_\tau(t)K_\tau^\prime(t)R_\tau K_\tau(t)\mathcal I_\tau(t)\\
	    		S_2(t)&=R_dK_d(t)\mathcal I_d(t)- B^\prime_d  P(t)\label{S(t)},\\
	    		S_3(t)&=R_\tau K_\tau(t)\mathcal I_\tau(t)- B^\prime_\tau  P(t)\label{S3(t)}.
	    		\end{align}
	    		\label{eq:perfmatzsg}
	    	\end{subequations} 
    	Here, $\Delta Q^\text{Ad}(t)=Q^\text{Ad}(t)-Q$ and  $P(t)$ is the solution of the RDE \eqref{eq:RDE}.
	    \end{theorem}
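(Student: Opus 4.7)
The proof follows the same value-function and completion-of-squares approach used in Theorem \ref{thm:thm1}, but adapted to the zero-sum setting where a single Riccati equation \eqref{eq:RDE} governs the equilibrium. My plan is to establish the representation
\[
J^{\text{Ad}}(u_{d\tau}(\cdot),u_a(\cdot)) = V(0,z(0)) + \tfrac{1}{2}\int_0^T\Bigl(\|u_a-\gamma_a^*\|^2_{R_a} - \|u_d-\gamma_d^{\text{Ad}}\|^2_{R_d} - \|u_\tau-\gamma_\tau^{\text{Ad}}\|^2_{R_\tau}\Bigr)dt,
\]
where $V(t,z(t)) = \tfrac{1}{2} z^\prime(t) P(t) z(t)$ and the time and state arguments of the controls and equilibrium strategies are suppressed. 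Once this identity is in hand, the saddle-point conditions \eqref{eq:zerosumgame} are immediate: fixing $u_{d\tau}=\gamma_{d\tau}^{\text{Ad}}$ leaves only the non-negative attacker term, minimized at $u_a=\gamma_a^*$; fixing $u_a=\gamma_a^*$ leaves only the non-positive team terms, jointly maximized at $u_{d\tau}=\gamma_{d\tau}^{\text{Ad}}$. The common saddle value is $V(0,z(0))$.

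To derive the identity, I would differentiate $V(t,z(t))$ along trajectories of \eqref{eq:state_eq1}, substitute $\dot P(t) = -Q + P(t)(S_a - S_{d\tau})P(t)$ from \eqref{eq:RDE} with $S_{d\tau} = B_d R_d^{-1} B_d^\prime + B_\tau R_\tau^{-1} B_\tau^\prime$, and then regroup the resulting state-quadratic terms against the contribution of $Q^{\text{Ad}}(t) = Q + \Delta Q^{\text{Ad}}(t)$ from the parametric index \eqref{eq:obj:zero_sum_stru}. By design, the pieces $\tfrac{1}{2} z^\prime P S_a P z$ and $-\tfrac{1}{2} z^\prime P S_{d\tau} P z$ coming from $\dot P$ match the first two summands of $\Delta Q^{\text{Ad}}$ in \eqref{Q^s(t)}, while the remaining $-\mathcal I_d^\prime K_d^\prime R_d K_d \mathcal I_d - \mathcal I_\tau^\prime K_\tau^\prime R_\tau K_\tau \mathcal I_\tau$ pieces of $\Delta Q^{\text{Ad}}$ supply the state-quadratic parts needed to complete the squares in $u_d$ and $u_\tau$. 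The cross-terms $u_d^\prime S_2(t) z + z^\prime S_2^\prime(t) u_d$ and the analogous $S_3$-terms are tuned via $S_2 = R_d K_d \mathcal I_d - B_d^\prime P$ and $S_3 = R_\tau K_\tau \mathcal I_\tau - B_\tau^\prime P$ so that they exactly cancel the surplus $u_d^\prime B_d^\prime P z$ and $u_\tau^\prime B_\tau^\prime P z$ arising from $\dot V$, and combine with $-\|u_d\|^2_{R_d}$ and $-\|u_\tau\|^2_{R_\tau}$ in \eqref{eq:obj:zero_sum_stru} to yield $-\|u_d - K_d \mathcal I_d z\|^2_{R_d}$ and $-\|u_\tau - K_\tau \mathcal I_\tau z\|^2_{R_\tau}$. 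The attacker side is standard: $\tfrac{1}{2} z^\prime P S_a P z + z^\prime P B_a u_a + \tfrac{1}{2} \|u_a\|^2_{R_a} = \tfrac{1}{2} \|u_a - \gamma_a^*\|^2_{R_a}$. Integrating from $0$ to $T$ and using $P(T) = F$ to convert the boundary term $V(T,z(T))$ into $\tfrac{1}{2}\|z(T)\|_F^2$ produces the displayed representation.

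The principal obstacle is bookkeeping rather than analysis: three controls and their cross products with the state must be reorganised so that every spurious mixed term cancels and the final expression contains exactly the three squared-deviation summands with the signs dictated by the zero-sum structure. The definitions of $S_2(t)$, $S_3(t)$, and $\Delta Q^{\text{Ad}}(t)$ in \eqref{eq:perfmatzsg} are tailored precisely to make this cancellation go through for every admissible gain pair $(K_d(t),K_\tau(t))$, and no new analytic difficulty arises beyond what was already handled in Theorem \ref{thm:thm1}.
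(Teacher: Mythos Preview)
Your proposal is correct and follows essentially the same route as the paper: define the quadratic value function $V(t,z)=\tfrac{1}{2}z^\prime P(t)z$, differentiate along trajectories of \eqref{eq:state_eq1}, substitute the Riccati equation \eqref{eq:RDE}, complete squares, and integrate to obtain the representation of $J^{\text{Ad}}$ as $V(0,z(0))$ plus a sum of sign-definite squared deviations, from which the saddle-point inequalities \eqref{eq:zerosumgame} follow directly. The only cosmetic difference is that the paper packages the two team squares into a single $\|u_{d\tau}-\gamma_{d\tau}^{\text{Ad}}\|^2_{R_{d\tau}}$ term (which is equivalent to your split form since $R_{d\tau}$ is block diagonal).
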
} 

    \begin{proof} { 
    	Consider the combined value function of defender and target is $V_{d\tau}(t,z(t))$ which is defined as:
    	\begin{align}
    	V_{d\tau}(t,z(t))=\frac{1}{2} z^\prime(t)P(t)z(t).
    	\end{align}
    	Upon differentiating the above equation with respect to $t$, and using \eqref{eq:state_eq1} and \eqref{eq:RDE} we get
    	\begin{align}
    	\dot{V}_{d\tau}(t,z(t))&=\frac{1}{2} \big[ B_{d\tau}u_{d\tau}(t)+ B_a u_a(t)  \big]^\prime P(t)z(t) +\frac{1}{2}z^\prime(t) P(t) \big[ B_{d\tau}u_{d\tau}(t)+ B_a u_a(t) \big]\nonumber\\
    	&\quad +\frac{1}{2}z^\prime(t)\left(-Q+P(t) \left(S_a-S_{d\tau}\right)P(t)\right)z(t),
    	\end{align}
    	Rearranging the above equation, using a few algebraic manipulations, and upon integrating both sides from $0$ to $T$, we get
    	\begin{multline*}
    	V_{d\tau}(T,z(T))=	V_{d\tau}(0,z(0))+\frac{1}{2} \int_{0}^{T}\Big(-
    	||u_\tau- K_\tau(t)\mathcal I_\tau(t) z(t)||_{R_\tau}^2\\
    	+||u_a+R_{a}^{-1}B_a^\prime P(t)z(t)||_{R_a}^2	-
    	||u_d-K_d(t)\mathcal I_d(t)z(t)||_{R_d}^2\\
    	+ \bigl[ -u_d^\prime(t)S_2(t)z(t)  -z^\prime(t)S_2^\prime(t)u_d(t)   -u_\tau^\prime(t)S_3(t)z(t)  -z^\prime(t)S_3^\prime(t)u_\tau(t)   \bigr] +||u_d(t)||^2_{R_d}\\
    	+||u_\tau(t)||^2_{R_\tau}-||u_a(t)||^2_{R_a} 
    	-||z(t)||^2_{Q^\text{Ad}(t)} \Big)dt, 
    	\end{multline*}
    	where $Q^\text{Ad}(t)$, $S_2(t)$ and $S_3(t)$ are as in \eqref{Q^s(t)}, \eqref{S(t)} and  \eqref{S3(t)}. As $V_{d\tau}(T,z(T))=\frac{1}{2}z^\prime (T)P(T)z(T)=\frac{1}{2}||z(T)||^2_F$, the above equation
    	can be written as
    	\begin{multline}\label{eq:j_zsg}
    	J^\text{Ad}(u_{d\tau}(.),u_a(.))=V_{d\tau}(0,z(0))-\frac{1}{2}\int_{0}^{T}\Big(- 
    	||u_a(t)-\gamma_a^*(t,z(t))||_{R_a}^2\\
    	+ 
    	||u_{d\tau}(t)-\gamma^\text{Ad}_{d\tau}(t,z(t);K_d(t),K_\tau(t))||_{R_{d\tau}}^2
    	\Big)dt. 
    	\end{multline}
    	Clearly,  $(\gamma^\text{Ad}_{d\tau},\gamma_a^*)$ is a NAFNE of the zero-sum game with performance index \eqref{eq:obj:zero_sum_stru}.}
    \end{proof}

 \begin{remark}  
	Similar to Remark \ref{rem:Bayesian}, the feature \textbf{F2}  in interaction \textbf{I2} results in a game of  asymmetric information. As a result, the attacker uses its standard FNE strategies associated with the performance index $J$, whereas the defenders and target use their NAFNE strategies associated with the parametric performance index $(J^\text{Ad};(K_d(.),K_\tau(.)))$. 
\end{remark} 

\section{ {Synthesis of network adapted feedback Nash equilibrium strategies}}		\label{sec:optimization}		
 The NAFNE strategies obtained from Theorem \ref{thm:thm1} (Theorem \ref{thm:thm2}) are parameterized by arbitrary gain matrices $K_d(t), (K_d(t), K_\tau(t)),~ t\in[0,T]$ leading to 
a plethora of implementable strategies for the visibility constrained players. To address this issue, we develop an information consistency criterion for selecting  a subset, also referred to as a refinement, of NAFNE strategies. 	The main idea of this  refinement procedure is that, whenever the information is symmetric, that is defenders in interaction \textbf{I1} (defenders and target in interaction \textbf{I2}) are able to see all the players, we require that the defenders' controls (defenders' and target's controls), at those time instants,  using NAFNE strategy must coincide with those using a standard FNE strategy. We formalize this (informational) consistency property in the following definition. 
\begin{definition}{ 
		Let $t_1\in[0,T]$ be a time instant when the defenders in interaction \textbf{I1} (defenders and target in interaction \textbf{I2}) can see all the players in the game process. A NAFNE strategy, parameterized by the gain matrices $K_d(t)$ ($K_d(t), K_\tau(t)$)~$t\in[0,T]$ is \emph{consistent}, and denoted by c-NAFNE, if the control $u_d(t_1)$ ($u_{d\tau}(t_1)$) satisfies  $u_d(t_1)=\gamma_d^\text{Ad}(t_1,z(t_1);K_d(t_1))=\gamma^*_d(t_1,z(t_1))$
		($u_{d\tau}(t_1)=\gamma_{d\tau}^\text{Ad}(t_1,z(t_1);K_d(t_1), K_\tau(t_1))=\gamma^*_{d\tau}(t_1,z(t_1))$).}
	\label{def:consistency}
\end{definition}

In the next theorem, we provide a method for computing the c-NAFNE strategies. First, we introduce the following error function  
{	\begin{align}
		\Theta_1(t)= \gamma_1||\Delta Q^\text{Ad}_d(t)||^2_f+
		\gamma_2||\Delta Q^\text{Ad}_\tau(t)||^2_f  +\gamma_3||\Delta Q^\text{Ad}_a(t)||^2_f+\gamma_4||S_1(t)||^2_f,	
		\label{eq:errfun1}
	\end{align} 
	which is parametric in $K_d(t)$  with $\gamma_i\in[0,1],~i=1,2,3,4$ for the interaction  \textbf{I1}, and 
	\begin{align} 
		\Theta_2(t)= \gamma_1||\Delta Q^\text{Ad}(t)||^2_f+\gamma_2||S_2(t)||^2_f+\gamma_3||S_3(t)||^2_f,
		\label{eq:errfun2}
	\end{align} 
	which is parametric in $(K_d(t), K_\tau(t))$ with $\gamma_i\in[0,1],~i=1,2,3$ for the interaction \textbf{I2}. The gradient of $\Theta_1(t)$ with respect to $K_{d_i}(t)$, in interaction \textbf{I1}, is given by
	\begin{subequations} 
		\begin{align} 
			\nabla_{K_{d_i}(t)} \Theta_1(t)&=\left({\mathbf{e}_n^i}^\prime \otimes \idm{2}\right) ~\big[4\gamma_1R_d K_d(t) \mathcal I_d(t)  \Delta Q^\text{Ad}_d(t)  -4\gamma_2 {B}_d^\prime P_\tau(t) \Delta Q^\text{Ad}_\tau(t)  \notag  \\
			&\quad -4\gamma_3 {B}_d^\prime P_a(t)\Delta Q^\text{Ad}_a(t) 
			+2\gamma_4R_dS_1(t)\big]\mathcal I_{d_i}^\prime(t).
			\label{eq:errgradient1}
		\end{align} 
		Further, in the interaction \textbf{I2}, the gradient of $\Theta_2(t)$ with respect to $K_{d_i}(t)$ is given by
		\begin{align} 
			\nabla_{K_{d_i}(t)} \Theta_2(t)=&\left({\mathbf{e}^i_n}^\prime \otimes \idm{2} \right) \big[-4\gamma_1R_dK_d(t)\mathcal I_d(t) \Delta Q^\text{Ad}(t) 
			 +2\gamma_2R_dS_2(t)
			\big]\mathcal I_{d_i}^\prime(t),
			\label{eq:errgradient2}
		\end{align} 
		and with respect to $K_{\tau}(t)$ is given by
		\begin{align} 
			\nabla_{K_{\tau}(t)} \Theta_2(t)=\big[-4\gamma_1R_\tau K_\tau(t)\mathcal I_\tau(t)  \Delta Q^\text{Ad}(t) 
			+2\gamma_3R_\tau S_3(t)\big]\mathcal I_{\tau}^\prime(t). 
			\label{eq:errgradienttau}
		\end{align} 
		\label{eq:errgradient}
\end{subequations}}

 { 
\begin{theorem}  In interaction \textbf{I1}, let for every $t\in[0,T]$, $K_d^*(t)$ be the solution of the following optimization problem 
	\begin{align} 
	K^*_d(t)=\argmin_{K_d(t)} \Theta_1(t).\label{eq:optproblem}  
	\end{align}
	Then the NAFNE strategy parameterized by $K_d^*(t),t\in[0,T]$, that is,
	$\gamma_d^\text{Ad}(t,z(t);K_d^*(t)) $, $t\in[0,T]$ is
	a c-NAFNE strategy. Similarly, in interaction \textbf{I2}, 
	for every $t\in[0,T]$, $(K_d^*(t),K_\tau^*(t))$ be the solution of the following optimization problem
	\begin{align} 
	(K_d^*(t),K_\tau^*(t))=\argmin_{(K_d(t),K_\tau(t))} \Theta_2(t).\label{eq:optproblem2}  
	\end{align}
	Then the NAFNE strategy parameterized by $(K_d^*(t),K_\tau^*(t))$, $t\in[0,T]$, that is,\\
	$\gamma_{d\tau}^\text{Ad}(t,z(t);(K_d^*(t),K_\tau^*(t)))$, $t\in[0,T]$ is
	a c-NAFNE strategy.	 
	\label{thm:consistency}
\end{theorem}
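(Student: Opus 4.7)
The plan is to exploit the non-negativity of the error functions $\Theta_1(t)$ and $\Theta_2(t)$, each being a weighted sum of squared Frobenius norms, and to show that at any time instant of full visibility the minimum value zero is attainable and every minimizer enforces the consistency requirement of Definition \ref{def:consistency}. The starting reduction translates consistency into an algebraic identity: comparing $\gamma_d^{\text{Ad}}(t_1,z(t_1);K_d(t_1)) = K_d(t_1)\mathcal I_d(t_1)z(t_1)$ with $\gamma_d^{*}(t_1,z(t_1)) = -R_d^{-1}B_d^\prime P_d(t_1)z(t_1)$ from Theorem~\ref{thm:fbne}, consistency for \textbf{I1} at $t_1$ is equivalent to $S_1(t_1) = 0$; the parallel reduction for \textbf{I2} gives $S_2(t_1) = 0$ and $S_3(t_1) = 0$.

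I next verify that at a time $t_1$ of full visibility the relevant information matrices are invertible. Under full visibility in \textbf{I1}, $[\mathcal A(t_1)]_{i\bullet} = \odm{n+1}^\prime$ for every $d_i\in\mathcal D$, so the diagonal factor in \eqref{eq:infomatrix} becomes $\idm{n+1}$, and the remaining factor is the unimodular change-of-coordinates matrix that maps the state into the vector with $i$-th entry $z_{d_i}$ and other entries $z_p - z_{d_i}$. Hence $\mathcal I_{d_i}(t_1)$ is invertible and so is the stacked $\mathcal I_d(t_1)$ when viewed against the block-diagonal $K_d(t_1)$ of \eqref{eq:kdm}; applied identically to $\mathcal I_\tau(t_1)$, the same conclusion holds under full visibility in \textbf{I2}.

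I then exhibit an explicit zero of $\Theta_1(t_1)$ by choosing, for each $i$, $K_{d_i}^\circ(t_1)$ equal to the $i$-th block row of $-R_d^{-1}B_d^\prime P_d(t_1)$ post-multiplied by $\mathcal I_{d_i}(t_1)^{-1}$. This respects the block-diagonal form \eqref{eq:kdm} and enforces $S_1(t_1) = 0$. Substituting into \eqref{eq:perfindex}, the relations $\Delta Q^{\text{Ad}}_\tau(t_1) = 0$ and $\Delta Q^{\text{Ad}}_a(t_1) = 0$ follow immediately since both terms are linear in $S_1$, and a short calculation using $(K_d^\circ(t_1)\mathcal I_d(t_1))^\prime R_d (K_d^\circ(t_1)\mathcal I_d(t_1)) = P_d(t_1)B_d R_d^{-1}B_d^\prime P_d(t_1)$ yields $\Delta Q^{\text{Ad}}_d(t_1) = 0$. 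Hence the infimum of $\Theta_1$ equals zero. Since $\Theta_1 \geq 0$, any minimizer $K_d^*(t_1)$ must likewise be a zero of $\Theta_1$ and, in particular, a zero of the $S_1$-term, which is exactly the consistency identity. The argument for \textbf{I2} is verbatim, pairing $K_d^\circ$ with an analogously defined $K_\tau^\circ(t_1) := -r_\tau^{-1}B_\tau^\prime P(t_1)\,\mathcal I_\tau(t_1)^{-1}$ to make $S_2(t_1)$ and $S_3(t_1)$ simultaneously vanish, from which $\Delta Q^{\text{Ad}}(t_1) = 0$ follows by direct substitution in \eqref{eq:perfmatzsg}.

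The main obstacle is purely structural: one must verify that all terms appearing in $\Theta_1$ (respectively $\Theta_2$) can be driven to zero \emph{simultaneously} under full visibility, rather than separately. This hinges on the implication $S_1(t) = 0 \Rightarrow \Delta Q^{\text{Ad}}_\bullet(t) = 0$ (and its counterpart in \textbf{I2}), which is an immediate consequence of the way the $\Delta Q^{\text{Ad}}_\bullet$ are constructed in \eqref{eq:perfindex}--\eqref{eq:perfmatzsg}. Together with the invertibility of $\mathcal I_d(t_1)$ and $\mathcal I_\tau(t_1)$ under full visibility, this structural property is precisely what makes the consistency-based refinement encoded by \eqref{eq:optproblem}--\eqref{eq:optproblem2} well-posed and delivers the claim.
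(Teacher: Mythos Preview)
Your approach is essentially the paper's: at a full-visibility instant $t_1$ the information matrices are invertible, so one can exhibit an explicit block-diagonal gain that makes $S_1(t_1)=0$ (respectively $S_2(t_1)=S_3(t_1)=0$), which in turn forces all $\Delta Q^{\text{Ad}}_\bullet(t_1)=0$ and hence $\Theta_1(t_1)=0$ (respectively $\Theta_2(t_1)=0$); since these error functions are nonnegative, the exhibited gain is a minimizer and yields the FNE control. The paper also checks the gradient \eqref{eq:errgradient} vanishes, whereas you rely directly on nonnegativity of $\Theta_j$, which is cleaner; your additional step of arguing that \emph{every} minimizer must annihilate the $S_1$-term (and hence satisfy consistency) is a genuine improvement over the paper's unjustified identification $K_d^*(t_1)=\bar K_d(t_1)$, though note it tacitly requires $\gamma_4>0$ in \textbf{I1} and $\gamma_2,\gamma_3>0$ in \textbf{I2}.

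One slip: in \textbf{I2} the team's FNE strategy \eqref{eq:zero_sum_control2} carries a \emph{positive} sign, and from \eqref{S3(t)} zeroing $S_3(t_1)$ requires $K_\tau^\circ(t_1)=R_\tau^{-1}B_\tau^\prime P(t_1)\,\mathcal I_\tau(t_1)^{-1}$, not the negative of this as you wrote; the analogous correction applies to your $K_{d_i}^\circ$ in \textbf{I2}.
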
}
\begin{proof} 
  	 {For interaction \textbf{I1}, let $t_1\in[0,T]$ be a time instant in the game process when all the defenders can see all the players.
	From \eqref{eq:infomatrix}, this implies that the information matrices $\mathcal I_{d_i}(t_1)$ are non-singular for all $i=1,2,\cdots,n$.
	First, for the interaction given in P1, we show that the feedback gain matrix $\bar{K}_d(t_1)$ with its diagonal entries given by
	$\bar{K}_{d_i}(t_1)=-\left({\mathbf{e}^i_n}^\prime \otimes \idm{2} \right)R_d^{-1}B_d^\prime P_d(t_1)\mathcal I_{d_i}^{-1}(t_1) $ for $i=1,2,\cdots,n$ solves \eqref{eq:optproblem}.
	To see this, with the above choice of matrices the feedback gain matrix satisfies
	$\bar{K}_d(t_1)\mathcal I_d(t_1)=-R_d^{-1}B_d^\prime P_d(t_1)$.  Then using this in  \eqref{eq:perfindex}, gives
	$S_1(t_1)=0$, $Q_p^\text{Ad}(t_1)=Q_p$ for $p\in\{d,\tau,a\}$. Then, from \eqref{eq:errfun1} and \eqref{eq:errgradient1}, we get $\Theta_1(t_1)=0$ and $\nabla_{K_{d_i}(t_1)}\Theta_1(t_1)=0$ for all $i=1,2,\cdots,n$. This implies, $\bar{K}_d(t_1)$ minimizes $\Theta_1(t_1)$, that is, $K^*_d(t_1)=\bar{K}_d(t_1)$. Then, the control action at $t_1$ using the NAFNE strategy parameterized ${K}^*_d(t), t\in[0,T]$ satisfies $\gamma^\text{Ad}_d(t_1,z(t_1);K_d^*(t_1))={K}^*_d(t_1)\mathcal I_d(t_1)z(t_1)=-R_d^{-1}B_d^\prime P_d(t_1)z(t_1)=\gamma_d^*(t_1,z(t_1))$. This implies, from Definition \ref{def:consistency}, 
	$\gamma_d^\text{Ad}(t,z(t);K_d^*(t)),~t\in[0,T]$ is a c-NAFNE strategy.} 

 {In interaction \textbf{I2}, let $t_1\in[0,T]$ be the time instant when all the defenders and the target can see all the players. Then, from \eqref{eq:infomatrix} and \eqref{eq:infomatrixtau}  we have that the matrices $\mathcal I_d(t_1)$ and
	$\mathcal I_\tau(t_1)$ are invertible. Then using 	$\bar{K}_d(t_1)$ with its diagonal entries given by $\bar{K}_{d_i}(t_1)=\left({\mathbf{e}^i_n}^\prime \otimes \idm{2} \right)R_d^{-1}B_d^\prime P(t_1)\mathcal I_{d_i}^{-1}(t_1) $ for $i=1,\cdots,n$ and $\bar{K}_\tau(t_1)=R_\tau^{-1}B_\tau^\prime P(t_1)\mathcal I_{\tau}^{-1}(t_1) $, in \eqref{eq:perfmatzsg} we get 
	$S_2(t_1)=0$, $S_3(t_1)=0$, $Q^\text{Ad}(t_1)=Q$. Then, from \eqref{eq:errfun2}, \eqref{eq:errgradient2} and \eqref{eq:errgradienttau}, we get $\Theta_2(t_1)=0$, $\nabla_{K_{d_i}(t_1)}\Theta_2(t_1)=0$  for all $i=1,\cdots,n$ and $\nabla_{K_\tau(t_1)}\Theta_2(t_1)=0$.  This implies, $(\bar{K}_d(t_1),\bar{K}_\tau(t_1))$ minimize $\Theta_2(t_1)$, that is, $(K^*_d(t_1),K^*_\tau(t_1))=(\bar{K}_d(t_1),\bar{K}_\tau(t_1))$. Using the same arguments as before we have that 
	$\gamma_{d\tau}^\text{Ad}(t,z(t);(K_d^*(t),K_\tau^*(t))),~t\in[0,T]$ is a c-NAFNE strategy.} 
\end{proof} 		 
 {\begin{remark}  We note that for interaction \textbf{I1}, the optimization problem \eqref{eq:optproblem} is well-posed as 	$\Theta_1(t)\geq 0$ for all $K_d(t)$ and $t\in[0,T]$. From Theorem \ref{thm:consistency}, the gain matrices $K^*_d(t), t\in[0,T]$, obtained  from \eqref{eq:optproblem}, result in performance indices $(J_d^\text{Ad},J_\tau^\text{Ad},J_a^\text{Ad};K^*_d(.))$ which are closer to $(J_d,J_\tau,J_a)$. The performance indices  parameterized by
	the gain matrices $K_d^*(t),~t\in[0,T]$ can be referred to as best achievable performance indices; see \cite{lin2015nash} where this concept was introduced. For interaction \textbf{I2}, using similar arguments, it follows the performance index $(J^\text{Ad};(K^*_d(.),K^*_\tau(.)))$ parametrized by the gain matrices 
	$(K_d^*(t),K_\tau^*(t))$, $t\in[0,T]$ is the best achievable and closer to $J$. 
\end{remark}
In the next theorem, we study the effect of varying visibility radii on the c-NAFNE strategies.
\begin{theorem}
	Consider two TAD games with limited observations with identical problem parameters (including the initial state), and  differ only in defender $d_i$'s visibility radius in interaction \textbf{I1} (either defender $d_i$'s or target visibility radius in interaction \textbf{I2}).  Let $T_1$ and $T_2$ represent the time instants at which there exist an outgoing edge from the defender $d_i$ in interaction \textbf{I1} (either defender $d_i$'s or target in interaction \textbf{I2}) for the first time in these two games respectively. Then,  in interaction \textbf{I1} (\textbf{I2}) the control actions of the defenders (defenders and target) using their c-NAFNE strategies in these two games are identical during the time period $[0,\min\{T_1,T_2\})$. 
	\label{thm:delayproperty}
\end{theorem}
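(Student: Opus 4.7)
The plan is to exploit the observation that on $[0,\min\{T_1,T_2\})$ defender $d_i$ (respectively the target, when $\zeta_\tau$ is the perturbed radius in \textbf{I2}) has no outgoing edges in the visibility network of \emph{either} game. Equivalently, the $i$th row $[\mathcal A(t)]_{i\bullet}$ of the augmented adjacency matrix in \eqref{eq:extadj} is identically zero, so $\mathcal I_{d_i}(t)=\zdm{2(n+1)\times 2(n+1)}$ by \eqref{eq:infomatrix}. Consequently, no matter what minimizer $K^*_{d_i}(t)$ is returned by the optimization \eqref{eq:optproblem}, the c-NAFNE control of defender $d_i$ satisfies $u_{d_i}(t)=K^*_{d_i}(t)\mathcal I_{d_i}(t)z(t)=0$ throughout the interval in both games; the same mechanism forces $u_\tau(t)=0$ when $\mathcal I_\tau(t)=0$ in \textbf{I2}.

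Next I would observe that every other ingredient of the closed-loop dynamics is game-independent. The coupled Riccati systems \eqref{eq:Riccati} and \eqref{eq:RDE} contain no visibility data, so $P_d,P_\tau,P_a$ (respectively $P$) coincide across the two games; consequently the unconstrained players' FNE strategies $\gamma_\tau^*(\cdot),\gamma_a^*(\cdot)$ are identical functions of $(t,z)$. For each $j\neq i$, the information matrix $\mathcal I_{d_j}(t)$ is built from \eqref{eq:limob} using the \emph{unperturbed} radius $\zeta_{d_j}$ applied to the displacements in $z(t)$. Hence whenever the state trajectories of the two games agree at time $t$, so do $\{\mathcal I_{d_j}(t)\}_{j\neq i}$ (and $\mathcal I_\tau(t)$, when $\tau$ is unperturbed), and consequently, via \eqref{eq:errfun1}--\eqref{eq:errgradient1} (respectively \eqref{eq:errfun2}--\eqref{eq:errgradienttau}), so do the minimizers $K^*_{d_j}(t)$ of the convex quadratic $\Theta_1(t)$ (respectively $(K^*_d(t),K^*_\tau(t))$ of $\Theta_2(t)$).

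Plugging the c-NAFNE controls into \eqref{eq:state_eq} (or \eqref{eq:state_eq1}) closes the loop: the resulting vector field $f(t,z)$ is literally the same function of $(t,z)$ in both games on $[0,\min\{T_1,T_2\})$, since $d_i$'s block vanishes identically and every other block is assembled from quantities untouched by the radius perturbation. Because the two games share the initial state $z(0)$, a standard existence-uniqueness argument applied on each smooth sub-interval between visibility switches---whose locations are themselves synchronized across games by the state-matching established inductively from $t=0$---yields $z^{(1)}(t)=z^{(2)}(t)$ on $[0,\min\{T_1,T_2\})$. Substituting back, the c-NAFNE controls of all defenders (and of the target in \textbf{I2}) coincide on this interval.

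The main obstacle is the implicit coupling between controls, state, and the optimizers $K^*(t)$: in principle, perturbing $\zeta_{d_i}$ could shift the state trajectory and thereby alter the other defenders' visibility conditions at later times, invalidating the matching argument. The uniqueness-of-solutions step dispatches this, because the only channel through which $\zeta_{d_i}$ can enter the closed-loop dynamics is $d_i$'s own control, which has been shown to vanish on $[0,\min\{T_1,T_2\})$. A secondary, mild technicality is the piecewise character of $f(t,z)$ at visibility switches induced by players other than $d_i$; a concatenation argument over consecutive smooth sub-intervals handles this cleanly, and the same template---with $\mathcal I_\tau(t)=0$ in place of $\mathcal I_{d_i}(t)=0$---covers the target-radius variation in \textbf{I2}.
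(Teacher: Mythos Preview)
Your proposal is correct and follows the same core idea as the paper's proof: the key observation that $\mathcal I_{d_i}(t)=\zdm{2(n+1)\times 2(n+1)}$ on $[0,\min\{T_1,T_2\})$, so that $u_{d_i}(t)=0$ regardless of the gain returned by \eqref{eq:optproblem}, after which every remaining ingredient of the closed loop is unaffected by $\zeta_{d_i}$. The paper's own argument is in fact much terser---it records $\mathcal I_{d_i}(t)=0$ and then simply asserts that ``all other parameters in both the games are identical''---whereas you carry out the bootstrap explicitly, matching the state trajectories via uniqueness of solutions and thereby synchronizing the other defenders' information matrices and their gains. Your version is the more complete argument.

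One small correction: you call $\Theta_1(t)$ a ``convex quadratic'' in $K_d(t)$. It is not; $\Delta Q_d^{\text{Ad}}(t)$ contains the term $\mathcal I_d^\prime(t)K_d^\prime(t)R_dK_d(t)\mathcal I_d(t)$, which is quadratic in $K_d$, so $\|\Delta Q_d^{\text{Ad}}(t)\|_f^2$ is quartic and the problem is non-convex (the paper notes this in Remark~\ref{rem:networkexternality}). This does not damage your argument, since what you actually need is that the two games present \emph{the same} objective $\Theta_1(t)$ (respectively $\Theta_2(t)$) once the state trajectories agree; any fixed selection rule for $\argmin$ then returns the same $K^*_{d_j}(t)$ in both games. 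Just drop the convexity claim and phrase the step as ``the optimization problems are identical, hence so are their selected minimizers.''
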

\begin{proof}
	 {In interaction \textbf{I1},	following the network feedback information structure, the defender $d_i$'s  information matrix \eqref{eq:infomatrix} satisfies $\mathcal I_{d_i}(t)=0$   for all $t\in [0,\min\{T_1,T_2\})$ in both the games. Further, as all other parameters in both the games are identical, except defender $d_i$'s visibility radius,  
		the joint equilibrium control actions of the defenders $u_d(t)$ using their c-NAFNE strategies is identical in both the games for all $t\in  [0,\min\{T_1,T_2\})$. Similar reasoning follows for a defender $d_i\in \mathcal D$ or the target in interaction \textbf{I2}.}
\end{proof}
\begin{remark} As the attacker and target use  their standard FNE strategies in interaction \textbf{I1}  their  state and control trajectories are also identical in these games during the time period $[0,\min\{T_1,T_2\})$; a similar 
	conclusion holds true only for the attacker in the interaction \textbf{I2}.
\end{remark}
\begin{remark}\label{rem:networkexternality} 
		The optimization problem, though well-posed, is non-convex and can be solved numerically.
		Further, from \eqref{eq:errgradient}  the computation of the gradient by the defender $d_i$ ($d_i\in \mathcal D$) in interaction \textbf{I1} (defender $d_i\in \mathcal D$ or target $\tau$ in interaction \textbf{I2}) requires joint feedback gain $K_d(t)$ (($K_d(t)$, $K_\tau(t)$)) and joint connectivity information $\mathcal I_d(t)$ ($\mathcal I_d(t)$, $\mathcal I_\tau(t)$).  In the real-world implementation, this information must be shared among the defenders (defenders and target) through a protocol as a part of cooperation. Such a protocol leads to a semi-decentralized implementation of c-NAFNE strategies.   
\end{remark} 

\section{Illustrative examples}	\label{sec:simulations} In this section, we illustrate the performance of c-NAFNE strategies studied in sections \ref{sec:optimization} through numerical experiments.   In real-world applications involving networked agents, with limited visibility, the presence of \emph{network externalities} plays an important role in the synthesis of players' strategies. In other words, when an outgoing link from a visibility constrained player forms or breaks, then it is important to know how this would affect the strategies of other team players who are not directly connected to this player. Besides verifying the obtained theoretical results, the numerical examples are designed to illustrate the
effect of network externalities.  To this end, we consider a $5$ player TAD game with $1$ target, $1$ attacker  and $3$ defenders. 
\subsubsection{\underline{ {Interaction \textbf{I1}}}} \label{sec:scenario1}  
 {Initially}, the  players $\{d_1,d_2,d_3,\tau,a\}$ are located at 
$\{(0,0),(1,1.5),(-1,0)$,\\ $(0,1),(-2,2)\}$ respectively. The control penalty parameter values are selected as $\{r_{d_1},r_{d_2},r_{d_3},r_\tau,r_a\}$\\ $=\{1,1,1,1.2,0.8\}$.
The interaction parameters  $\{q_{d_ia},f_{d_ia},q_{ad_i},f_{ad_i},q_{a\tau},f_{a\tau},q_{\tau a},f_{\tau a}\}$, $d_i\in \mathcal D$ in \eqref{eq:fullobj}  are set equal  to $1$, and the remaining  parameters are taken as  $T=6$,  {$\sigma_p=0.1$ for $p\in\{d_1,d_2,d_3,a\}$}. 
For implementation, we discretize the duration $[0,T]$ with a step size of $\delta=0.005$, and we use the matlab program \texttt{fminunc} for solving the  optimization \eqref{eq:optproblem} at each time step.
\begin{figure}[h] \centering \hspace{-0.15in}
	\begin{subfigure}[b]{.45\textwidth}\centering 
		\includegraphics[scale=1.5]{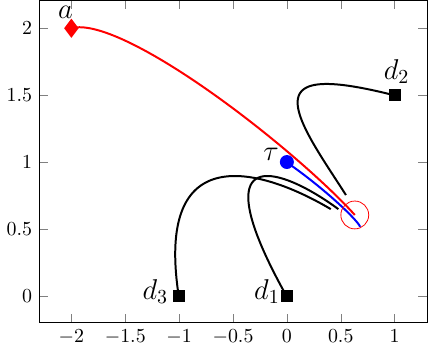}
		\caption{} 
		\label{fig:Scn12full}
	\end{subfigure}~~~
	\begin{subfigure}[b]{.45\textwidth}\centering 
		\includegraphics[scale=1.5]{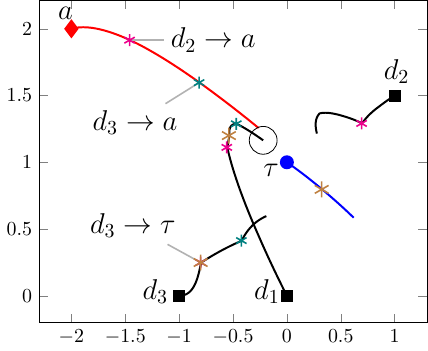}
		\caption{}
		\label{fig:Scn12pobs}
	\end{subfigure}  
\caption{Trajectories of players with complete (panel (a)) and limited  (panel (b)) observations with attacker operating in non-suicidal mode ($\lambda=1$). In panel (b) the markers illustrate the positions of the players at which the  labeled edge is active. }    
\end{figure}
\begin{figure}[h]  \centering 
	\begin{subfigure}[t]{0.3\textwidth}\centering 
		\includegraphics[scale=1.25]{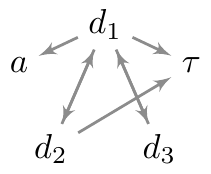}
		\caption{$t\in[0,0.16)$}
		\label{fig:Scn12gph1}
	\end{subfigure}~
	\begin{subfigure}[t]{0.3\textwidth}\centering 
		\includegraphics[scale=1.25]{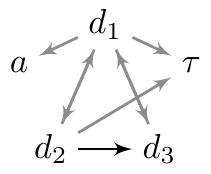}
		\caption{$t\in[0.16,0.49)$}
		\label{fig:Scn12gph2}
	\end{subfigure}~
	\begin{subfigure}[t]{0.3\textwidth}\centering 
		\includegraphics[scale=1.25]{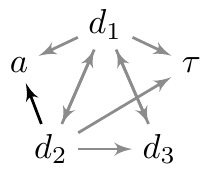}
		\caption{$t\in[0.49,0.575)$}
		\label{fig:Scn12gph3}
	\end{subfigure}		\vskip1.5ex
	\begin{subfigure}[t]{0.3\textwidth}\centering 
		\includegraphics[scale=1.25]{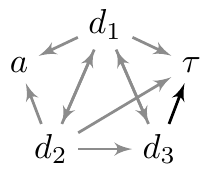}
		\caption{$t\in[0.575,0.915)$}
		\label{fig:Scn12gph4}
	\end{subfigure}	~
	\begin{subfigure}[t]{0.3\textwidth}\centering 
		\includegraphics[scale=1.25]{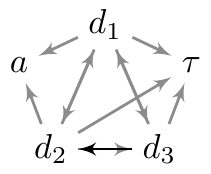}
		\caption{$t\in[0.915,0.96)$}
		\label{fig:Scn12gph5}
	\end{subfigure}	~
	\begin{subfigure}[t]{0.3\textwidth}\centering 
		\includegraphics[scale=1.25]{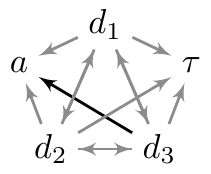}
		\caption{$t\in[0.96,1.395]$}
		\label{fig:Scn12gph6}
	\end{subfigure}	
	\caption{ {Evolution of visibility network for the interaction illustrated in Fig. \ref{fig:Scn12pobs}.  The dark arrow indicates the occurrence of a new edge in the network.}} 
\end{figure}
\begin{figure}[h]  \centering  \hspace{-0.15in}
	\begin{subfigure}[b]{.45\textwidth}\centering 
		\includegraphics[scale=1.5]{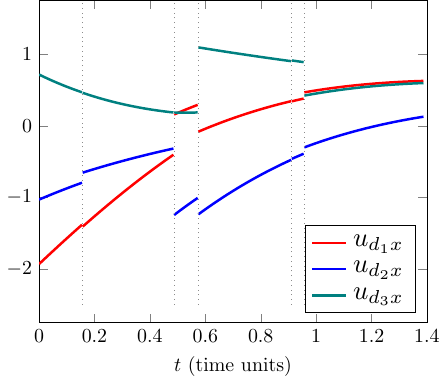}
		\caption{} 
		\label{fig:Scn12udx}
	\end{subfigure}~~~
	\begin{subfigure}[b]{0.45\textwidth}\centering  
		\includegraphics[scale=1.5]{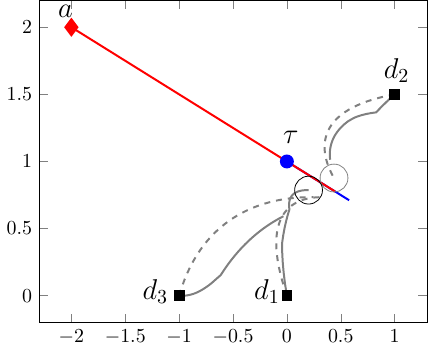} 
		\caption{}
		\label{fig:suicidalattacker}
	\end{subfigure}    
	\caption{{Panel (a) illustrates the c-NAFNE strategies of the defenders (along $x$-axis). The dotted vertical lines  in panel (a) illustrate time instants at which the visibility network changes structure.
			Panel (b) illustrates the trajectories of players with complete (dashed) and limited (solid line) observations with attacker operating in suicidal mode ($\lambda=0$).} } 
\end{figure}  
First we illustrate the scenario with a non-suicidal attacker. Fig. \ref{fig:Scn12full} illustrates the trajectories of the players with complete observations when all the players use their standard FNE strategies given by \eqref{eq:1d}. 
The game terminates at $t=2.66$ and results in the capture of the target by the attacker. Next, the visibility radii of the defenders are set to $\zeta_{d_1}=5$, $\zeta_{d_2}=2.25$ and $\zeta_{d_3}=1.25$. Fig. \ref{fig:Scn12pobs} illustrates the trajectories  using the c-NAFNE strategies from Theorem \ref{thm:thm1} and synthesized using Theorem \ref{thm:consistency}. The parameters in the optimization problem \eqref{eq:optproblem} are set as $\gamma_i=0.25,~i=1,2,3,4$ to indicate that the error terms in \eqref{eq:errfun1} are weighted equally.
Fig. \ref{fig:Scn12gph1}-\ref{fig:Scn12gph6} illustrate evolution of visibility network in the game. Whenever a new link forms (or disappears) in the network, and an additional gain term
is included (or deducted) from a defender's network adapted feedback strategy. \tb{Since all the defenders act as a team, this change in the network will reflect in all the defenders' control trajectories verifying the presence of network externalities; see also Remark \ref{rem:networkexternality}. Consequently, structural changes in the visibility network leads to jumps in the defenders' control trajectories; see Fig. \ref{fig:Scn12udx}. Further, these jumps lead to kinked state trajectories; see the labeled markers in Fig. \ref{fig:Scn12pobs}.}  
After $t=0.96$, all players can see each other, and from Theorem \ref{thm:consistency}, in the subgame starting at $t=0.96$ the c-NAFNE strategies coincide with the FNE strategies  given by \eqref{eq:1d}. \tb{This is because the c-NAFNE strategies synthesized using the information consistency criterion developed in section \ref{sec:optimization}}. 
At $t=1.395$, the game terminates with defender $d_1$ intercepting the attacker.  
Next, we set the parameter $\lambda=0$ to reflect the suicidal attacker while keeping all other parameters same as before. Fig. \ref{fig:suicidalattacker} illustrates the trajectories of the players with complete and limited observations. The game terminates with interception of the attacker by the defender $d_2$ ($d_1$) at time $t=2.375$ ($t=1.67$) with complete (limited) observations. \tb{We observe that the attacker and the target move along the straight line joining them at $t=0$, implying that the  strategies of the attacker and target are not affected by the visibility constraints of the defenders. These observations verify Theorem \ref{thm:straight_line}}.
\subsubsection{\underline{ {Interaction \textbf{I2}}}} 
\label{sec:scenario2}  	
 {Initially, the three defenders $d_1$, $d_2$, and $d_3$ are located at 
$(-1,0)$, $(-3,1)$, and $(1,2.5)$ respectively. The target  and the attacker are located at $(0.5,1)$ and $(-2.75,2.5)$ respectively. Except for the visibility radii of the defenders and target, the remaining parameters are set as in interaction \textbf{I1}. First, we analyze the effect of explicit cooperation of the defenders with the target under complete observations. Figure  \ref{fig:Int2full} illustrates the trajectories of the players when defenders act as a team against the attacker. The game ends at  $t=3.455$ with the interception of the attacker by the defender $d_1$. Next, we set the  visibility radii of the defenders as $\zeta_{d_1}=5 $ $\zeta_{d_2} =3 $ $\zeta_{d_3}=0.3$ and the target as $\zeta_\tau=10$.  Fig. \ref{fig:Int2part1}  illustrates the trajectories  using the c-NAFNE strategies from Theorem \ref{thm:thm2} and synthesized using Theorem \ref{thm:consistency}. The parameters in the optimization problem \eqref{eq:optproblem} are set as $\gamma_i=\tfrac{1}{3},~i=1,2,3$ to indicate that the error terms in \eqref{eq:errfun1} are weighted equally. The game terminates at $t=4.065$ with attacker capturing the target. Here, due to large visibility radius, the defender $d_1$ and the target $\tau$ can see all the players through out the game process. As observed in interaction \textbf{I1}, here also the changes in the network information influence other players in a team leading to kinks in the state trajectories.}
\begin{figure}[h] \centering \hspace{-0.15in}
	\begin{subfigure}[b]{0.45\textwidth}\centering 
		\includegraphics[scale=1.5]{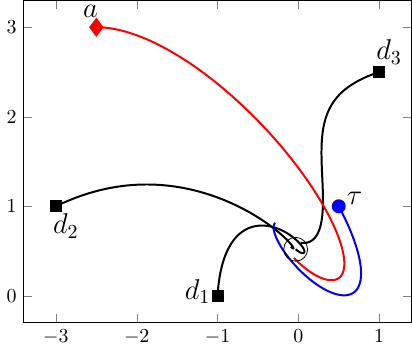}
		\caption{ } 
		\label{fig:Int2full}
	\end{subfigure}~~~
	\begin{subfigure}[b]{0.45\textwidth}\centering 
		\includegraphics[scale=1.5]{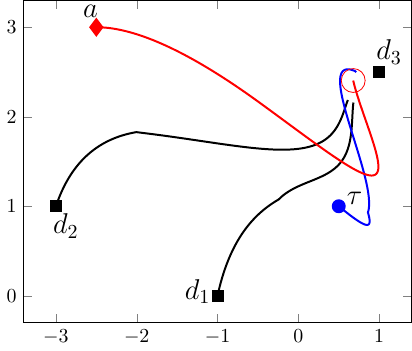}
		\caption{}
		\label{fig:Int2part1}
	\end{subfigure}  
\caption{{Panel (a) illustrates the trajectories of players with    complete observations.
		Panel (b) illustrates the trajectories of players with visibility constrained defenders-target team with parameters $\zeta_{d_1}=5,\zeta_{d_2}=3,\zeta_{d_3}=0.3, \zeta_\tau=10$.} }    
\end{figure}  
\begin{figure}[h] \centering \hspace{-0.15in}
	\begin{subfigure}[b]{0.45\textwidth}\centering 
		\includegraphics[scale=1.5]{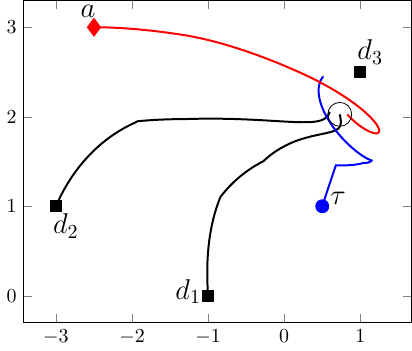}
		\caption{}
		\label{fig:Int2part2}
	\end{subfigure}~~~
	\begin{subfigure}[b]{0.45\textwidth}\centering 
		\includegraphics[scale=1.5]{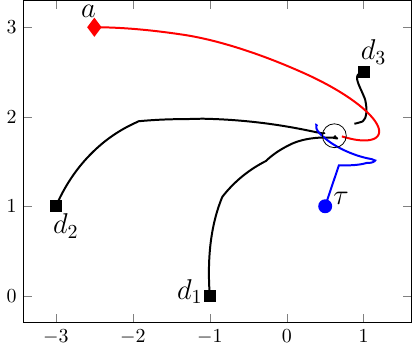}
		\caption{}
		\label{fig:Int2part3}
	\end{subfigure}  
	\caption{{Trajectories of players with visibility constrained defenders-target team, with parameters $\zeta_{d_1}=5,\zeta_{d_2}=3,\zeta_{d_3}=0.3, \zeta_\tau=2.5$ in panel (a) and 
		with parameters $\zeta_{d_1}=5,\zeta_{d_2}=3,\zeta_{d_3}=0.6, \zeta_\tau=2.5$ in panel (b).}}
\end{figure} 
 \\
  Next, we set the visibility radius of the target as $\zeta_\tau=2.5$, indicating that the target cannot see all the players initially.  Fig. \ref{fig:Int2part2} illustrates the trajectories of the players, and Fig. \ref{fig:Int2gph1}-\ref{fig:Int2gph5} illustrate evolution of the visibility information during this interaction. Fig. \ref{fig:Int2Control1} illustrate the c-NAFNE strategies (in the $y$-axis) of the defenders and the target. Notice, in the duration $[0,0.51)$ the target cannot see the attacker and its trajectory is influenced by $d_1$ (who can see the attacker) and $d_3$ (who cannot see the attacker). It moves towards the attacker in this period. At $t=0.51$, the target sees the attacker and start to move away from the attacker, resulting in the kink in its state trajectory (Fig. \ref{fig:Int2part2}) and jump in the control trajectory (Fig. \ref{fig:Int2Control1}). Here, defender $d_3$ cannot observe any player throughout the game process. However, the availability of the position information of the immobile $d_3$ influences the standard FNE strategies of the  attacker, and the  c-NAFNE strategies of other defenders and target who can observe $d_3$. \tb{Again, like in interaction \textbf{I1}, the effect of network externalities can be seen in the jumps in the c-NAFNE strategies of the team players; see Fig. \ref{fig:Int2gph1}-\ref{fig:Int2gph5} and Fig. \ref{fig:Int2Control1}}.
  The game ends at $t=3.46$ with defender $d_1$ intercepting the attacker.  
\begin{figure}[h]  \centering \hspace{-0.15in}
	\begin{subfigure}[t]{00.3\textwidth}\centering 
		\includegraphics[scale=1.25]{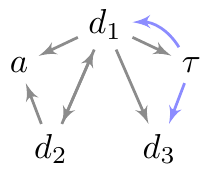}
		\caption{$t\in[0, 0.51)$}
		\label{fig:Int2gph1}
	\end{subfigure}~
	\begin{subfigure}[t]{00.3\textwidth}\centering 
		\includegraphics[scale=1.25]{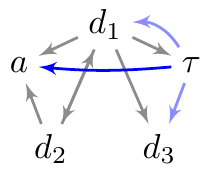}
		\caption{$t\in[0.51,   0.895)$}
		\label{fig:Int2gph2}
	\end{subfigure}~
	\begin{subfigure}[t]{00.3\textwidth}\centering 
		\includegraphics[scale=1.25]{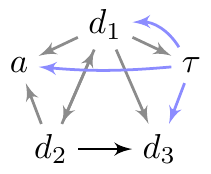}
		\caption{$t\in[0.895, 0.91)$}
		\label{fig:Int2gph3}
	\end{subfigure}	
	\vskip1.5ex 
	\begin{subfigure}[t]{00.3\textwidth}\centering 
		\includegraphics[scale=1.25]{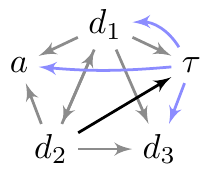}
		\caption{$t\in[0.91,  1.2 )$}
		\label{fig:Int2gph4}
	\end{subfigure}	~
	\begin{subfigure}[t]{00.3\textwidth}\centering 
		\includegraphics[scale=1.25]{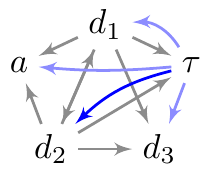}
		\caption{$t\in[1.2, 3.46]$}
		\label{fig:Int2gph5}
	\end{subfigure}	
\caption{ {Evolution of visibility network for the interaction illustrated in Fig. \ref{fig:Int2part2}}.} 
\end{figure} 
\begin{figure}[h] \centering \hspace{-0.15in}
		\begin{subfigure}[b]{0.45\textwidth}\centering 
			\includegraphics[scale=1.5]{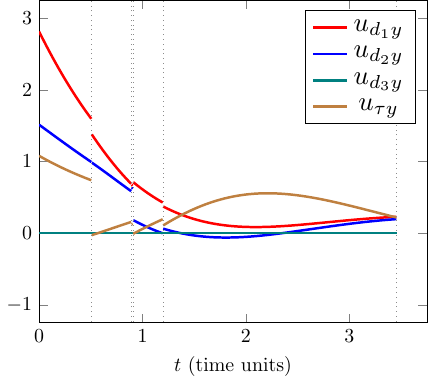}
			\caption{$  $}
			\label{fig:Int2Control1}
		\end{subfigure}~~~
		\begin{subfigure}[b]{0.45\textwidth}\centering 
			\includegraphics[scale=1.5]{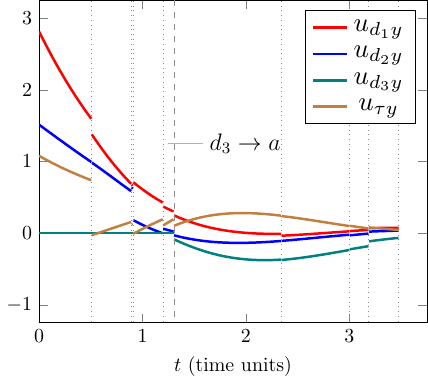}
			\caption{$  $}
			\label{fig:Int2Control2}
		\end{subfigure}  
		\caption{ {The visibility radii of other defenders and target set as $\zeta_{d_1}=5,\zeta_{d_2}=3,\zeta_\tau=2.5$. 
			Panel (a) illustrates the defenders-target team's c-NAFNE strategies (along $y$-axis) in the game with $\zeta_{d_3}=0.3$. Panel (b) illustrates the defenders-target team's c-NAFNE strategies (along $y$-axis) when $\zeta_{d_3}$ is changed from $0.3$ to $0.6$. The dashed vertical line at $t=1.305$ indicates that time instant when the edge $d_3\rightarrow a$ becomes active.}}
\end{figure} 
\\
To analyze the sensitivity of the visibility radii, we choose $\zeta_{d_3} =0.6$ keeping all the other parameters same. Fig. \ref{fig:Int2part3} shows the trajectories of the players. The game ends at $t=3.48$ with defender $d_1$ intercepting the attacker. We note that an outgoing edge  $d_3\rightarrow a$, from the defender $d_3$ becomes active for the first time at $t=1.305$. In the game with $\zeta_{d_3}=0.3$, 
the defender $d_3$ is inactive for the entire duration of the game. So, there does not exist any outgoing edge from the defender $d_3$ for the duration $[0,1.305)$ in both the games.   {From Fig. \ref{fig:Int2Control1} and \ref{fig:Int2Control2}, we notice that the c-NAFNE controls of the defenders and the target are identical in both the games for the duration $[0,1.305)$ and differ at and after $t=1.305$. This observation verifies Theorem \ref{thm:delayproperty}}.  
\section{Conclusions} 
\label{sec:conclusions}
In this paper, we have studied TAD games involving limited observations.   We have analyzed two variations leading to modeling the interactions as non-zero-sum and zero-sum differential games. We have demonstrated that the feedback strategies of the visibility constrained players must be adapted to the visibility network induced by these constraints, and introduced network feedback information structure. To obtain implementable strategies, we used an inverse game theory approach which leads to a plethora of  Nash equilibria, and we addressed this issue using an information consistency criterion. We have illustrated our results with numerical simulations.
The framework developed in this paper can be easily extended for analyzing other variations 
of TAD games involving limited observations, with an appropriate choice of game termination criterion.  In this paper, we have assumed that the visibility constrained players design their strategies based on what they can sense individually. However,  it is possible that teammates can communicate and share  information as a part of cooperation. Incorporation of these features would require development of a team centric consistency criterion, and we plan to investigate in this direction for future work. 
\bibliographystyle{IEEEtran}
\bibliography{TCNS}
\begin{IEEEbiography}[]{Sharad Kumar Singh} received the B.Tech. degree
	in electronics and communication engineering from Feroze Gandhi Institute of Engineering and Technology, Raebareli, India, in 2014. 
	He received M.S and Ph.D in  Electrical Engineering from the Indian Institute of Technology (IIT) Madras, Chennai, India, in 2022. Currently, he is working as a robotics researcher at Addverb Technologies, India. His research interests include mobile robotics, game theory, multiagent systems and optimal control.
\end{IEEEbiography}
\begin{IEEEbiography}[]{Puduru Viswanadha Reddy} received his Ph.D. degree in operations research from Tilburg University, The Netherlands, in 2011. He held post-doctoral appointment with GERAD, HEC-Montr\'{e}al, Montr\'{e}al, Canada, during 2012-16. He is currently an assistant professor at the Department of Electrical Engineering, Indian Institute of Technology -- Madras, Chennai, India. His current research interests are in game theory and optimal control and their applications in the control of multi-agent systems.
\end{IEEEbiography}

	\end{document}